\definecolor{ForestGreen}{rgb}{0.1333,0.5451,0.1333}
\definecolor{DarkRed}{rgb}{0.65,0,0}
\definecolor{Red}{rgb}{1,0,0}
\newtheorem{theorem}{Theorem}[section]
 \newtheorem{lemma}[theorem]{Lemma}
\newtheorem{observation}[theorem]{Observation}
\newtheorem{definition}[theorem]{Definition}
\newtheorem{conjecture}[theorem]{Conjecture}
\newcommand{\OO}{\mathcal{O}}
\newcommand{\APX}{1.546}
\newcommand{\APXs}{\alpha_{alg}}
\newcommand{\E}{\mathbb{E}}
\newcommand{\N}{\mathbb{N}}
\newcommand{\Z}{\mathbb{Z}}
\newcommand{\R}{\mathbb{R}}
\newcommand{\mue}{\mu_{\delta}}
\newcommand{\Inf}{\mathsf{Inf}}
\newcommand{\Valug}{\mathsf{Val_{UG}}}
\newcommand{\Optug}{\mathsf{Opt_{UG}}}
\newcommand{\calV}{\mathcal{V}}
\newcommand{\calU}{\mathcal{U}}
\newcommand{\eps}{\epsilon}
\newcommand{\KM}{\textsc{$k$-Median}\xspace}
\newcommand{\KME}{\textsc{$k$-Median}\xspace}
\newcommand{\CKME}{\textsc{Capacitated $k$-Median}\xspace}
\newcommand{\ksupplier}{\textsc{$k$-Supplier}\xspace}
\newcommand{\kcenter}{\textsc{$k$-Center}\xspace}
\newcommand{\FL}{\textsc{Facility Location}\xspace}
\newcommand{\PP}{\mathbf{P}}
\newcommand{\NP}{\mathbf{NP}}
\newcommand{\MKC}{\textsc{Maximum $k$-Coverage}\xspace}
\newcommand{\numalgo}{1.546}
\newcommand{\numhard}{1.416}
\newcommand{\nf}{\nicefrac}
\title{Separating \KME from the Supplier Version}
\author{Aditya Anand \qquad Euiwoong Lee \\ University of Michigan, Ann Arbor}
\date{}
\begin{document}
    
\maketitle
\pagenumbering{arabic}
\begin{abstract}

Given a metric space $(V, d)$ along with an integer $k$, the \KME problem asks to open $k$ centers $C \subseteq V$ to minimize $\sum_{v \in V} d(v, C)$, where $d(v, C) := \min_{c \in C} d(v, c)$. While the best-known approximation ratio 2.613 holds for the more general {\em supplier version} where an additional set $F \subseteq V$ is given with the restriction $C \subseteq F$, the best known hardness for these two versions are $1+1/e \approx 1.36$ and $1+2/e \approx 1.73$ respectively, using the same reduction from \MKC. We prove the following two results separating them.

\begin{enumerate}
    \item We give a \numalgo-parameterized approximation algorithm that runs in time $f(k) n^{O(1)}$. Since $1+2/e$ is proved to be the optimal approximation ratio for the supplier version in the parameterized setting, this result separates the original \KME from the supplier version. 
    
    \item We prove a \numhard-hardness for polynomial-time algorithms assuming the Unique Games Conjecture. This is achieved via a new fine-grained hardness of \MKC for small set sizes. 
\end{enumerate}

Our upper bound and lower bound are derived from almost the same expression, with the only difference coming from the well-known separation between the powers of LP and SDP on (hypergraph) vertex cover. 
% \keywords{Approximation algorithms \and Clustering \and Parameterized approximation \and Optimization \and Hardness of approximation}
\end{abstract}
\newpage

\section{Introduction}
\KME is perhaps the most well-studied clustering objective in finite general metrics. Given a (semi-)metric space $(V, d)$ along with an integer $k$, the goal is to open $k$ centers $C \subseteq V$ to minimize the objective function $\sum_{v \in V} d(v, C)$, where $d(v, C) := \min_{c \in C} d(v, c)$.
It has been the subject of numerous papers introducing diverse algorithmic techniques, including filtering~\cite{lin1992approximation, lin1992approximations}, metric embedding~\cite{bartal1998approximating, charikar1998rounding}, local search~\cite{arya2001local}, primal LP rounding~\cite{charikar1999constant, charikar2012dependent}, primal-dual~\cite{jain2001approximation}, greedy with dual fitting analysis~\cite{charikar1999improved, jain2003greedy}, and bipoint rounding~\cite{li2013approximating, byrka2017improved}. 
The current best approximation ratio is slightly lower than 2.613~\cite{cohen2023breaching, gowda2023improved}, achieved by the combination of greedy search analyzed dual fitting method and bipoint rounding. 

One can define a slightly more general version of \KME by putting restrictions on which points can become centers; the input specifies a partition of $V$ into {\em candidate centers} $F \subseteq V$ and {\em clients} $L \subseteq V$, and the goal is to choose $C \subseteq F$ with $|C| = k$ to minimize $\sum_{v \in L} d(v, C)$. Let us call this version the {\em supplier version} in this paper. This distinction between the original version and the supplier version is apparent in a similar clustering objective \kcenter (where the objective function is $\max_{v \in C} d(v, C)$), whose supplier version has been called \ksupplier and sometimes studied separately. \kcenter and \ksupplier admit $2$ and $3$ approximation algorithms respectively, which are optimal assuming $\PP \neq \NP$, so we know there is a strict separation between these two. 

For \KME, this distinction has not been emphasized as much. 
Earlier papers define \KME as the original version~\cite{lin1992approximation, lin1992approximations, bartal1998approximating, charikar1998rounding, charikar1999constant}, but most papers in this century define \KME as the supplier version. 
One reason for this transformation might be the influence of techniques coming from the \FL problem. 
As the name suggests, \FL originates from the planning perspective (just like \ksupplier), so it is natural to assume that $V$ is partitioned into the set of clients $L$ and set of potential facility sites $F$ with opening costs $f : F \to \R^+$ and the goal is to open $C \subseteq F$ (without any restriction on $|C|$) to minimize $f(C) + \sum_{v \in L} d(v, C)$. 
But once we interpret \KME in the context of geometric clustering, such a restriction seems unnecessary, and it is more natural to study the original version directly. 

Then the question is: will the original version of \KME exhibit better approximability than the supplier version, just like \kcenter and \ksupplier? Algorithmically, no such separation is known, as many of the current algorithmic techniques, including the ones giving the current best approximation ratio, rely on the connection between \KME and \FL; the best approximation ratio for \KME still holds for the supplier version. 
However, these two problems show different behaviors in terms of hardness.
The reduction from \MKC to \FL by Guha and Khuller~\cite{guha1999greedy}, adapted to \KME by Jain et al.~\cite{jain2003greedy}, shows that the supplier version of \KME is NP-hard to approximate within a factor $(1+\nf2e) \approx 1.73$. The same reduction yields only $(1+\nf1e) \approx 1.36$ for the original \KME. This separation in the current hardness, together with the separation between \kcenter and \ksupplier, suggest that the best approximation ratio achieved by a polynomial-time algorithm for the original \KME might be strictly smaller than that of the supplier version. Of course, formally showing such a separation requires a polynomial-time algorithm for the original version with the approximation ratio strictly less than $1+\nf2e$, which seems hard to achieve now given the big gap between the upper and lower bounds for both  versions. 

A relatively new direction of {\em parameterized approximation}, which studies algorithms running in time $f(k) \cdot n^{O(1)}$ for any computable function $f(k)$, 
might be a good proxy for the polynomial-time approximation. In fact, Cohen-Addad et al.~\cite{cohen2019tight} showed that there exists a parameterized approximation algorithm for the supplier version that guarantees a $(1+\nf2e+\eps)$-approximation for any $\eps > 0$. Moreover, this additional running time does not seem to cross the NP-hardness boundary; assuming the Gap Exponential Time Hypothesis (Gap-ETH), they show that no FPT algorithm can achieve an approximation ratio $(1+\nf2e - \eps)$ for any $\eps > 0$. Many subsequent (and concurrent) papers studied parameterized approximability of many variants of \KME and obtained improved approximation ratios over polynomial-time algorithms~\cite{cohen2019fixed, adamczyk2019constant, goyal2020fpt, feng2020unified, inamdar2020capacitated, xu2020constant, bandyapadhyay2022fpt, jaiswal2023clustering, goyal2023tight, agrawal2023clustering, bandyapadhyay2023fpt}, but for quite a few of these problems, including \KME and \CKME, no parameterized approximation algorithm achieved a result impossible for their polynomial-time counterparts. Therefore, apart from its practical benefit when $k$ is small, the study of parameterized approximation might yield a meanigful prediction for polynomial-time approximability.

Our main algorithmic result is the following parameterized approximation algorithm that strictly separates the original and supplier versions of \KME. To the best of our knowledge, it is the first separation between the original and supplier versions of any variant of \KME. 

\begin{restatable}{theorem}{algomain}
For any $\eps > 0$, 
there is an algorithm for \KME that runs in time $f(k, \eps)\cdot n^{O(1)}$ and guarantees an approximation ratio of 
\[
\min_{p \in [0, 1]} 
\max_{d \in \Z_{\geq 1}} 
\bigg[ 
1 + \bigg( 1 - \frac{1-p}{d} \bigg) ^d
+ \bigg( 1 -  \frac{pd + (1 - p)}{d}\bigg)^d + \eps
\bigg] \approx \numalgo + \eps,
\]
where the optimal value is achieved with $p^* := \nf{(10 - 6\sqrt{2}) }{7}\approx 0.22$ and $d^* := 3$. 
\label{thm:algo}
\end{restatable}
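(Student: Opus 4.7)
The plan is to extend the FPT framework of Cohen-Addad et al.~\cite{cohen2019tight}, which achieves a $(1+2/e+\eps)$-approximation for the supplier version, by exploiting the original \KME's freedom to open any vertex---including clients---as a center. First, I apply the standard FPT preprocessing used in prior work: a coreset reduction to $\poly(k/\eps)$ clients and enumeration over the structural parameters of the optimum. This provides, for each optimal cluster $i \in [k]$, an approximate radius $R_i^*$ and an ``external'' candidate region $B_i \subseteq V$ guaranteed (up to a $(1+\eps)$ factor) to contain the optimal center $c_i^*$.

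Fix a mixing parameter $p \in [0,1]$ on an $\eps$-grid; the algorithm will try every such $p$ and return the best solution. For each cluster $i$, consider a randomized rounding that performs $d$ independent trials, where each trial is independently labeled \emph{internal} (with probability $p$) or \emph{external} (with probability $1-p$). An internal trial opens a client sampled from the cluster, which by the triangle inequality always yields a center within $O(R_i^*)$ of $c_i^*$. An external trial samples a candidate from $B_i$; in the worst case, the probability that this sample lies in the target ``good'' subregion around $c_i^*$ is only $1/d$. Among the $d$ samples, the best-serving candidate is retained. A per-cluster cost analysis then decomposes into
\[
\E[\mathrm{cost}_i] \;\leq\; R_i^*\cdot\bigg[1 + \bigg(1-\tfrac{1-p}{d}\bigg)^d + \bigg(1-\tfrac{pd+(1-p)}{d}\bigg)^d\bigg],
\]
where the second term is the probability that \emph{no external} trial succeeds (per-trial external success probability $\tfrac{1-p}{d}$) and the third term is the probability that \emph{no trial} succeeds at all (per-trial overall success probability $p + \tfrac{1-p}{d} = \tfrac{pd+(1-p)}{d}$, since an internal trial always succeeds). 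The integer $d$ arises as the worst-case ``spread'' of good external candidates in $B_i$, analogous to the uniformity of a hypergraph in vertex-cover-style rounding, and is a structural feature of the instance rather than an algorithmic choice---this is why the theorem takes $\max_d$ over $\Z_{\geq 1}$.

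The main obstacle will be realizing the two failure-probability bounds simultaneously under a single coupled $d$-sample draw, uniformly across all integer $d$. Concretely, one likely needs separate algorithmic branches for small $d$ (where the external success probability is well-bounded by enumeration) and a concentration-based argument for large $d$, so that the above per-cluster bound is valid for whichever $d$ the instance realizes. Finally, the saddle point $(p^*, d^*) = ((10-6\sqrt 2)/7, 3)$ must be verified by KKT-style calculus, checking that at $p^*$ the $d = 3$ configuration dominates the bound and no other integer $d$ beats it. The key phenomenon enabling strict separation from the supplier bound is that this worst case $d^* = 3$ is a small finite integer rather than $d \to \infty$, so the ``$1/e$'' limit governing the supplier analysis is avoided.
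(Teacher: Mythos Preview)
Your proposal misidentifies the mechanism that produces the expression in the theorem, and as written it does not yield an approximation guarantee.

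First, the role of $d$. In the paper's analysis $d$ is \emph{not} a number of sampling trials per cluster, nor a ``spread'' parameter of a single candidate region $B_i$. It is the LP degree of a fixed client $v$: after solving the natural LP (with the extra constraint that each $C_i$ receives total opening mass $1$), $v$ is fractionally connected to centers in some $d$ of the sets $C_1,\dots,C_k$, with mass roughly $1/d$ each in the worst case. The rounding opens exactly one center per cluster (either the leader $\ell_i$ with probability $p$, or one point of $C_i$ drawn from the LP $y$-marginals with probability $1-p$), and the $d$ independent ``coin flips'' that generate $(1-\tfrac{1-p}{d})^d$ and $(1-\tfrac{pd+(1-p)}{d})^d$ come from the $d$ \emph{different clusters} that $v$ is fractionally attached to, not from repeated sampling inside one cluster. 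Your scheme of ``$d$ independent trials per cluster, keep the best'' is a different algorithm, and without an LP there is no distribution on $B_i$ under which an external trial succeeds with probability exactly $1/d$; the $1/d$ in the paper is precisely the LP mass $x_{v,c}$.

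Second, the comparison baseline. Your inequality $\E[\mathrm{cost}_i]\le R_i^*\cdot[\cdots]$ is a per-cluster statement against $R_i^*$, the leader-to-center distance, but $\sum_i R_i^*$ is not $\mathrm{OPT}$ (indeed it can be arbitrarily smaller). The paper's analysis is per \emph{client}: it bounds $\E[cost(v)]/LP(v)$, where $LP(v)=\sum_i \mu_i s_i$ is $v$'s LP connection cost, and then sums over $v$ to compare against the LP optimum $\le \mathrm{OPT}$. The bulk of the technical work is a sequence of factor-revealing programs showing that the worst configuration for a single $v$ has all $s_i$ equal, all $\mu_i=1/d$, and $\mu_{B'}=\mu_{C'}=0$, which is what collapses the ratio to $g(p,d)=1+(1-\tfrac{1-p}{d})^d+((1-p)(1-\tfrac1d))^d$. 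Without the LP and this per-client extremal analysis, you have no route from your per-cluster bound to an approximation ratio.
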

Our main hardness result is the following theorem against {\em polynomial-time} algorithms assuming the Unique Games Conjecture (UGC)~\cite{Khot02}, improving the best known lower bound of $(1+\nf1e) \approx 1.36$~\cite{guha1999greedy, jain2003greedy}. 

\begin{restatable}{theorem}{hardnessmain}
Assuming the Unique Games Conjecture, for any $\eps > 0$, \KM is NP-hard to approximate within a factor 
\[
\max_{d \in \Z_{\geq 3}} 
\min_{p \in [0, 1]} 
\bigg[
1 + \bigg( 1 - \frac{1-p}{d-1} \bigg) ^d
+ \bigg( 1 - \min(1, \frac{pd + (1 - p)}{d-1})\bigg)^d  - \eps \bigg] \approx \numhard - \eps.
\]
\label{thm:hardness}
\end{restatable}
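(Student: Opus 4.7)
The plan is to follow the Guha--Khuller--Jain et al.\ template of reducing \MKC to \KME, but first to establish a new UGC-based hardness for \MKC in which every set has size exactly $d$ and which remains strong even against algorithms that are allowed to ``use a client in place of a set.'' This refined \MKC hardness, combined with a metric construction that uses only distances in $\{0,1,2,3\}$, is what allows us to push past the $1+1/e$ barrier coming from the Guha--Khuller reduction applied to the original (non-supplier) version.

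\textbf{Fine-grained \MKC hardness (main technical step).} For any $d\geq 3$ and $\eps>0$, the goal is UGC-hardness of distinguishing two kinds of \MKC instances in which every set has size exactly $d$. In the YES instance there exist $k$ sets covering the universe $U$. In the NO instance, for every choice of $pk$ elements $C\subseteq U$ and $(1-p)k$ sets $\mathcal{T}\subseteq \mathcal{S}$, the plan is to show that at least a $(1-(1-p)/(d-1))^d-\eps$ fraction of elements lie in no set of $\mathcal{T}$, and at least a $(1-\min(1,(pd+1-p)/(d-1)))^d-\eps$ fraction of elements lie in sets all of which are both outside $\mathcal{T}$ and disjoint from $C$. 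The reduction is from Unique Games with a dictator test built on a biased product distribution on $\{0,1\}^d$ whose LP marginal is $1/(d-1)$, mirroring the structure behind the Bansal--Khot UGC-hardness of $d$-uniform hypergraph vertex cover; influence decoding converts any cover that beats the claimed bounds into a good UG labeling.

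\textbf{Reduction and cost analysis.} Given the \MKC instance, I would form the metric on $V:=U\cup \mathcal{S}$ by declaring $d(u,S)=1$ whenever $u\in S$ and taking the shortest-path metric truncated at $3$; every point is a candidate center, so this is a valid original (non-supplier) \KME instance. With $C\cup \mathcal{T}$ opened where $|C|=pk$ and $|\mathcal{T}|=(1-p)k$, a client $u\in U$ pays $0$ if $u\in C$, $1$ if $u\in S$ for some $S\in \mathcal{T}$, $2$ if $u$ shares a set with some element of $C$ but neither of the previous holds, and $3$ otherwise. Writing $\E[\text{cost}]=\sum_{i=1}^3\Pr[\text{cost}\geq i]$ and plugging in the NO guarantees from the previous step gives a per-client cost of at least $1+(1-(1-p)/(d-1))^d + (1-\min(1,(pd+1-p)/(d-1)))^d - O(\eps)$, while the YES case achieves per-client cost $1$. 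Minimizing over the algorithm's choice of $p$ and maximizing over the reducer's choice of $d$ yields the stated factor.

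\textbf{Main obstacle.} The hard part is the first step: simultaneously controlling \emph{both} coverage events -- ``no opened set contains $u$'' and ``no opened set or opened client activates any set incident to $u$'' -- with a single dictator-vs-no-influence soundness argument. The first event is exactly what the Bansal--Khot analysis of hypergraph vertex cover handles, and is where the $1/(d-1)$ factor (as opposed to the algorithm's $1/d$ from Theorem~\ref{thm:algo}) enters -- the LP-vs-SDP gap alluded to in the abstract. The second event requires augmenting the same Fourier decoding so that it also bounds the rate of ``set activation,'' and the $\min(1,\cdot)$ cap inside the second term naturally reflects the saturation point at which a single set is hit with probability $1$.
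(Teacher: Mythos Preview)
Your overall plan is sound, but the obstacle you flag is avoidable, and one structural detail is set up on the wrong side.

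\textbf{You do not need to control the two events simultaneously.} The paper's key simplification is that opening a center at a client $e\in E_H$ is, for the purpose of distance-$2$ coverage, dominated by opening centers at the $d$ vertices of $e$. If a solution places $(1-p)k$ centers in $V_H$ and $pk$ centers in $E_H$, then every hyperedge within distance~$2$ of the solution is already hit by the vertex set $C':=(C\cap V_H)\cup\bigcup_{e\in C\cap E_H}e$, with $|C'|\le((1-p)+dp)k$. Both events therefore reduce to the \emph{same} question, ``how many hyperedges can $\alpha|V_H|$ vertices hit?'', asked once at $\alpha=\tfrac{1-p}{d-1}$ and once at $\alpha=\min\bigl(1,\tfrac{(1-p)+dp}{d-1}\bigr)$. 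The paper thus proves only one hypergraph-vertex-cover hardness (Theorem~\ref{thm:hvc}): in the NO case, for \emph{every} $\alpha\in[0,1]$, any $\alpha|V_H|$ vertices intersect at most a $1-(1-\alpha)^d+\eps$ fraction of hyperedges. The dictatorship test is on $[d-1]^R$, built from a \emph{pairwise-independent} distribution on $[d-1]^d$ in which every atom has a coordinate equal to $1$ (this is what forces $k=|V_H|/(d-1)$), rather than a biased $\{0,1\}^d$ test in the Bansal--Khot style. Pairwise independence is exactly what lets Mossel's invariance give $\E\bigl[\prod_j F_j\bigr]\ge\prod_j\E[F_j]-\eps=(1-\alpha)^d-\eps$ directly, with no coupling of two events and no separate ``set-activation'' decoding.

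\textbf{Your regularity condition is on the wrong side.} You impose that every set has size $d$ and make elements the clients. But the expansion trick above needs each \emph{client} to have exactly $d$ neighbors on the intended-center side, so that a client-center expands to at most $d$ ``real'' centers. In the paper's framing this is ``$d$-uniform hypergraph, clients $=$ hyperedges, centers $=$ vertices, $k=|V_H|/(d-1)$''; in your set-system language it translates to each element lying in exactly $d$ sets, not each set having $d$ elements. With only your condition, the $pk$ opened elements could touch an unbounded number of sets, and the $(pd+1-p)/(d-1)$ term would not follow. Once you flip this (or simply work on the hypergraph side as the paper does), your layered cost computation $\E[\mathrm{cost}]=\sum_{i\ge 1}\Pr[\mathrm{cost}\ge i]$ is exactly the right way to finish.
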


Somewhat surprisingly, this expression almost exactly matches the algorithmic guarantee except (1) denominators being $d-1$ instead of $d$ (which causes the $\min(1, .)$) and (2) $\max_d \min_p$ vs $\min_p \max_d$.

The difference between $d-1$ and $d$ comes from the well-known separation between the powers of LP and SDP on (hypergraph) vertex cover, elaborated more in the overview below.  Furthermore, we show that the max-min and the min-max values of the function in \Cref{thm:algo} coincide (\Cref{lemma:optimize}), so these two results might point towards the optimal approximability of \KME, both for polynomial-time and parameterized algorithms. 

Our hardness result proceeds via, and hence shows, more fine grained hardness for \MKC. Given an instance of \MKC with universe size $n$ and a constant $d \geq 3$ so that each element appears in exactly $d$ sets, we show that assuming the Unique Games Conjecture, it is NP-Hard to distinguish the YES-case where $k$ sets cover all elements, and the NO-case where for any $\alpha \in [0,1]$, $\alpha k$ sets cover at most $1 - (1 - \frac{\alpha}{d-1})^d$ fraction of the elements where $k = \frac{n}{d-1}    $(see~\Cref{thm:hvc}). This is a new fine-grained version of  the $(1 - \frac{1}{e})$ hardness of~\cite{feige1998threshold} when $\alpha = 1$ for small $d$.

\subsection{Overview}
In this overview, we explain our intuition behind these two results and how they are related. For the sake of simplicity, let us ignore the arbitrarily small constant $\eps > 0$ in the overview.

\paragraph{Hardness.}
Our improved hardness, which eventually guided our algorithm, comes from observing the classical reduction of Guha and Khuller~\cite{guha1999greedy} and Jain et al.~\cite{jain2003greedy}. It is a reduction from the famous \MKC problem; given a hypergraph $H = (V_H, E_H)$, and $k \in \N$, choose $S \subseteq V_H$ of $k$ vertices that intersects the most number of hyperedges.\footnote{Typically, it is stated in terms of the dual set system where the input is a set system, and the goal is to choose $k$ sets whose union size is maximized.} 
The classical result of Feige~\cite{feige1998threshold} shows that it is NP-hard to distinguish between the (YES) case where $k$ vertices intersect all hyperedges and the (NO) case where any choice of $k$ vertices intersect at most $(1-\nf1e)\approx 0.63$ fraction of the hyperedges. 

Given a hardness instance $(H, k)$ for \MKC, the reduction to \KME is simply to construct the vertex-hyperedge incidence graph $G = (V_G, E_G)$ where $V_G = V_H \cup E_H$ and a pair $(v, e) \in V_H \times E_H$ is in $E_G$ if $v \in e$ in $H$. 
In the supplier version, one can simply let $F = V_H$ and $L = E_H$. Given $C \subseteq F$, for each $e \in L$, the distance $d(e, C)$ is $1$ if it contains a vertex in $C$ and at least $3$ otherwise, thanks to the bipartiteness of $G$. Therefore, the average distance for clients in $L$ becomes $1$ in the YES case and at least $1 \cdot (1-\nf1e) + 3 \cdot (\nf1e) = 1 + \nf2e$ in the NO case.
% \anote{Is Feige NP-Hardness under this strong requirement? Or is it some weaker assumption, should check.}

The same construction works for the original version without $F$ and $L$ with two differences. The first difference is that the objective function sums over points in $V_H$ as well as $E_H$, but this can be handled by duplicating many points for each $e \in E_H$. The bigger issue is the fact that $C$ might contain points from $E_H$, which implies that $d(e, C)$ can be possibly $2$ even when $C$ does not contain a vertex from $e$. Therefore, the hardness factor weakens to $1 \cdot (1-\nf1e) + 2 \cdot (\nf1e) = 1 + \nf1e$. 

The natural question to ask to improve the hardness is then: how much does placing centers at $E_H$ help cover other points in $E_H$? 
%Suppose every hyperedge in $H$ has size $d$, so that in the YES case, $k := |V_H| / d$ vertices intersect (almost) all hyperedges. 
Suppose that a solution $C$ contains $(1-p) k$ centers from $V_H$ and $p k$ centers from $E_H$ for some $p \in [0, 1]$. 
Compared to the solution that puts all centers in $V_H$, putting some centers at $E_H$ will hurt the ability to cover points in $E_H$ at distance $1$ (i.e., for $e \in E_H$, $d(e, C) = 1$ only if there exists $v \in (C \cap e)$), but it will help covering points in $E_H$ at distance $2$ (i.e., for $e \in E_H$, $d(e, C) \leq 2$ if there exists $f \in (C \cap E_H)$ with $e \cap f \neq \emptyset$).
To bound the effect of the latter, one natural and conservative observation is, when the hypergraph $H$ has uniformity $d$ (i.e., each $e \in E_H$ has $|e| = d$), letting $C' = (V_H \cap C) \cup (\cup_{e \in (E_H \cap C)} e)$, a hyperedge $e \in E_H$ has $d(e, C) \geq 3$ if it does not intersect $C'$. 
Since each $e \in C \cap E_H$ generates at most $d$ new points in $C'$, we can easily see $|C'| \leq ((1-p) + d p )k$. 
Therefore, it seems we need a good hardness of approximation for $\MKC$ with small $d$ in order to go beyond $1+\nf1e$. 
Quantitatively, for a $d$-uniform hypergraph $H$, the ideal hardness one can imagine is the following.

\begin{itemize}
    \item In the YES case, $k := \nf{|V_H|}{d}$ vertices intersect (almost) all the hyperedges. For \KME, there exist $k$ centers that cover almost all points at distance $1$. 
    
    \item In the NO case, $H$ is like a {\em random hypergraph}; for any $\gamma \in [0, 1]$, any choice of $\gamma |V_H|$ vertices intersect at most $1 - (1 - \gamma)^d$ hyperedges. In other words, for fixed $\gamma$, the best solution is essentially to pick each vertex with probability $\gamma$. 

    For \KME, for any solution $C$ with $|C \cap V_H| = (1-p) k$ and $|C \cap E_H| = p k$, we use the above guarantee twice to $(C \cap V_H)$ and $C'$ to conclude that (1) at most a $1 - (1 - \nf{(1-p)}{d})^d$ fraction of hyperedges can be covered by distance at most $1$ and (2) at most $1 - (1 - \nf{(1-p + dp)}{d})^d$ fraction of hyperedges can be covered by distance at most $2$. 
\end{itemize}

Such an ideal hardness turns out to be impossible to prove, but one can get close. For any integer $d \geq 3$, one can prove the above ideal hardness for $d$-uniform hypergraphs with the only difference being $k := \nf{|V_H|}{d-1}$ instead of $\nf{|V_H|}{d}$. This means that the denominator $d$ is replaced by $d-1$ in the expressions $1 - (1 - \nf{(1-p)}{d})^d$ and $1 - (1 - \nf{(1-p + dp)}{d})^d$ above, which yields the hardness result of \Cref{thm:hardness}. If we had proved the ideal hardness above, then the hardness ratio would have exactly matched the algorithm of \Cref{thm:algo}. We elaborate more about this gap in the discussion below.

\paragraph{Algorithms.}
Our algorithm follows the framework of Cohen-Addad et al.~\cite{cohen2019tight} for the supplier version. 
One can compute a {\em coreset} $P \subseteq V$, $|P| = O(k \log n)$ with weight function $w : P \to \R^+$ so that for any choice of $k$ centers, the objective function for points in $V$ is almost the same as that for weighted points in $P$. Therefore, one can focus on $P$ as the set of points. 
Let $P^*_1, \dots, P^*_k$ be the partition of $P$ with respect to the optimal clustering. For each $i \in [k]$, let $c^*_i \in V$ be the optimal center corresponding to $P^*_i$, and $\ell_i = \mathrm{argmin}_{\ell \in P^*_i} d(\ell, c^*_i)$ be the {\em leader} of $P^*_i$. Guessing $\ell_1, \dots, \ell_k$ and (approximately) guessing $d(\ell_1, c^*_1), \dots, d(\ell_k, c^*_k)$
takes a parameterized time $O(k \log n)^k$. 
Then, for each $i \in [k]$, let $C_i$ be the points that have distance (almost) equal to $d(\ell_i, c^*_i)$. By definition $c^*_i \in C_i$.

We solve the standard LP relaxation for \KME, with only the additional constraint that we fractionally open exactly one center from each $C_i$. (A standard trick can make sure that $C_i$'s are disjoint.)
The rounding algorithm opens exactly one center from $C_i$ (different clusters are independent) according to LP values.
Fix a point $v \in P$. A standard analysis for \KME shows that with probability at least $1 - \nf1e$, a center fractionally connected to $v$ will open, and the expected distance from $v$ to the closest open center conditioned on this event is at most its contribution to the LP objective. In the other event, if $v \in P^*_i$ for some $i \in [k]$, since one center $c \in C_i$ is open and 
\[
d(v, c) \leq 
d(v, c^*_i) +  
d(\ell_i, c^*_i) +
d(\ell_i, c),
\]
the fact that $d(v, c^*_i) \geq d(\ell_i, c^*_i)$ (by definition of $\ell_i$) and 
$d(\ell_i, c^*_i) \approx d(\ell_i, c)$ (by definition of $C_i$) ensures that this distance is almost at most $3d(v, c^*_i)$. Combining these two events shows that the total expected distance is at most $(1-\nf1e)LP + (\nf3e)OPT \leq (1+\nf2e)OPT$. 

In order to obtain a possibly better result over the supplier version, using the power that we can open centers anywhere, for some fixed $p \in [0, 1]$, we do the following. For each $i \in [k]$, we simply open the leader $\ell_i$ as a center with probability $p$, and with the remaining probability $(1-p)$, open exactly one center from $C_i$ as usual. 
For the analysis, let us fix $v \in P_i$ and consider its distance to the closest center in this rounding strategy. For simplicity, assume $v$ is fractionally connected to $c_1, \dots, c_d$ with fraction $\nf1d$ each, where $c_i \in C_i$ for $i \in [d]$. Let us also assume that $d(v,c_i) = 1$ for each $i \in [d]$, and $d(\ell_i,c_i) = 1$ for each $i \in [d]$. Observe that
$c_i$ opens with probability $\nf{(1-p)}{d}$, $\ell_i$ opens with probability $p$, and some other center in $C_i$ opens with probability $\nf{(1-p)(d-1)}{d}$. If at least one $c_i$, $i \in [d]$ is opened, then $v$'s connection cost is at most $1$. If at least one $\ell_i, i \in [d]$ is open, then $v$'s connection cost is at most $2$.
It follows that the probability that no center at distance $1$ from $v$ opens is exactly $(1 - \nf{(1-p)}{d})^d$, and the probability that no center at distance $\leq 2$ from $v$ opens is exactly $(\nf{(1-p)(d-1)}{d})^d = (1 - \nf{(1-p + pd)}{d})^d$.
Note they exactly match the analysis for the (ideal) hardness; the first probability $(1 - \nf{(1-p)}{d})^d$ corresponds to the case where we choose random $(1-p)k$ vertices in the hypergraph, and the second probability $(1 - \nf{(1-p + pd)}{d})^d$ corresponds to the case where we choose random $(1-p + pd)k$ vertices due to picking $pk$ hyperedges!

Of course, the above intuition already assumes various regularities involving the LP values and distances. We show that such regular cases are indeed the worst case in terms of approximation ratio. The analysis involves a series of factor-revealing programs that try to find the worst-case configuration for a fixed point, which reveal simplifying structural properties of the optimal configuration via rank and convexity arguments.

\paragraph{Discussion and Future Work.}
The first question from this work might be: where is the difference between $d$ and $d-1$ coming from, and how can we close this gap? 
As we saw, if we had been able to prove the ``ideal'' hardness, that is, the inability to distinguish a $d$-uniform hypergraph with vertex cover size $\nf{|V_H|}{d}$ and a random $d$-uniform hypergraph, this gap would not exist.

If we just want to {\em fool} the LP, since every $d$-uniform hypergraph admits a fractional covering of all hyperedges with a $\nf1d$ fraction of vertices, we believe that a random $d$-uniform hypergraph will exhibit such a property, giving an LP (even Sherali-Adams) gap with the exact same factor as the upper bound. Indeed, when $d = 2$, Charikar, Makarychev, and Makarychev~\cite{charikar2009integrality} showed that even $O(n^{\delta})$-levels of Sherali-Adams cannot distinguish a random graph and a nearly-bipartite graph.

However, for $d$-uniform hypergraphs with small $d$, SDPs are expected to strictly outperform the LP, and this is evident when $d=2$ where the LP (even Sherali-Adams) guarantees at most $\nf34$-approximation for \MKC while the SDP guarantees a $0.929$-approximation~\cite{manurangsi2019note}. Our Unique Games Hardness is based on a standard way to construct an SDP gap based on pairwise independence, which is essentially from H\aa stad's seminal result for \textsc{Max 3LIN} and \textsc{Max 3SAT} when $d = 3$~\cite{haastad2001some}. It is reasonable to expect that SDPs give a strictly better guarantee for \KME instances constructed from (the vertex-hyperedge incidence graph of) $d$-uniform hypergraphs, but reducing the general \KME instance to such a case seems a significant technical challenge as our analysis relies on simple properties of the LP.

A more fundamental limitation of our work is that the upper bound is given by parameterized algorithms and the lower bound is only against polynomial-time algorithms. The biggest open question is to prove an optimal approximability for either polynomial-time or parameterized algorithms. Currently, it might be the case that the optimal thresholds for these two classes of algorithms coincide just like the supplier version.

Improving the best polynomial-time approximation ratio for any version of \KME has been studied intensively. Matching the $(1+\nf2e)$-hardness for the supplier version or going below for the original version is a long-standing open problem, but we believe that giving an algorithm for the original version with the approximation ratio strictly better than the current known 2.613~\cite{cohen2023breaching, gowda2023improved} for the supplier version is a meaningful step. Such an algorithm is likely to yield insight into how to take advantage of opening centers anywhere in the context of primal-dual or dual-fitting analysis, whereas our improved parameterized algorithm uses a primal LP rounding algorithm.

From the parameterized hardness perspective, we first remark that \MKC in $d$-uniform hypergraphs with constant $d$ (or even $d = f(k)$) admits a parameterized approximation scheme~\cite{badanidiyuru2012approximating, jain2023parameterized}. However, there might be more sophisticated ways to understand the effect of opening centers anywhere. A better understanding of the hardness of \textsc{Label Cover} in the parameterized setting, combined with Feige's reduction to \MKC, might be a way to achieve a strictly improvement over $(1 + \nf1e)$.

\section{Approximation algorithm}

In this section, we prove our main algorithmic result \Cref{thm:algo}, restated below. 

\algomain*

\Cref{subsec:algo} presents our algorithm. 
\Cref{subsec:analysis-setup} introduces the setup for our analysis, which tries to understand the worst-case configuration for a fixed point via a series of factor-revealing programs.
\Cref{subsec:s},~\Cref{subsec:mu} and~\Cref{subsec:final} consider different such programs and deduce that the optimal solutions exhibit simple structures, which leads to our final bound.

\subsection{Algorithm}
\label{subsec:algo}
We begin by computing a \emph{coreset}. For any choice of centers $C \subseteq V$ and a set of weighted points $V' \subseteq V$ where point $v \in V'$ has weight $w_v$, let $cost(V', C)$ denote the cost of assigning the points $V'$ to the centers $C$ so as to minimize the total weighted connection cost. Concretely, for every assignment function $f: V' \rightarrow C$, define the cost of $f$ as $\sum_{v \in V'} w_v d(v,f(v))$. Then $cost(V', C)$ is the minimum cost among all such assignments $f$.
\begin{definition}
Given a \KM instance $(V,d,k)$ and $\eps > 0$, an $\eps$-coreset $V'$ is a set of points with a weight $w_v$ for each $v \in V'$, so that for any choice of centers $C \subseteq V$ with $|C| = k$, $cost(V', C) \in (1-\epsilon, 1+ \epsilon) cost(V, C)$. 
\end{definition}
\begin{theorem}[\cite{fl11}]\label{thm:coreset}
There is a polynomial time algorithm that when given a \KM instance and $\eps > 0$, computes an $\eps$-coreset $V' \subseteq V$ with $|V'| \leq \OO(\nf{k\log n}{\epsilon^2})$.
\end{theorem}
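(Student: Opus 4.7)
The plan is to follow the standard sensitivity sampling framework that underlies most modern coreset constructions for clustering. First I would run a polynomial-time constant-factor approximation algorithm for \KME (such as local search or the primal--dual algorithm of Jain--Vazirani) on $(V, d, k)$ to obtain a reference solution $\hat{C} = \{\hat{c}_1, \dots, \hat{c}_k\}$ with induced clusters $\hat{B}_1, \dots, \hat{B}_k$ and total cost $\hat{Z} = \sum_{v \in V} d(v, \hat{C})$. This reference is used only to measure how much influence each point can possibly exert on the cost of any competing $k$-center set; since $\hat{Z}$ is within a constant factor of $cost(V, C^*)$, the costs of all interesting solutions are comparable to $\hat{Z}$.

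Next, for each $v \in \hat{B}_i$ I would define a sensitivity upper bound of the form
\[
s(v) \;\leq\; c \cdot \bigg( \frac{d(v, \hat{c}_i)}{\hat{Z}} + \frac{1}{|\hat{B}_i|} \bigg)
\]
for an absolute constant $c$. The motivation is standard: by the triangle inequality, for any candidate $C$ one can charge $d(v, C)$ against a combination of the local radius $d(v, \hat{c}_i)$ and the average cost of $\hat{B}_i$ under $C$, which gives a pointwise bound on $d(v, C)/cost(V, C)$. Summing yields $S := \sum_v s(v) = O(1)$. I would then draw $m = \OO(k \log n / \epsilon^2)$ samples i.i.d.\ from the distribution $v \mapsto s(v)/S$, weighting each sampled point $v$ by $w_v = S/(m \cdot s(v))$. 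By construction, for every fixed $C$ the weighted cost $cost(V', C)$ is an unbiased estimator of $cost(V, C)$, and Bernstein's inequality gives $(1 \pm \epsilon)$ concentration for that specific $C$.

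The technical heart of the argument is lifting this pointwise concentration to a guarantee that holds \emph{simultaneously} over all $k$-subsets $C \subseteq V$. I would invoke the Feldman--Langberg framework as a black box: the function class $\mathcal{F} = \{v \mapsto d(v, C) : C \subseteq V, |C| \leq k\}$ has pseudo-dimension $O(k \log n)$, and sensitivity sampling with $m = O(\mathrm{pdim}(\mathcal{F}) \cdot S / \epsilon^2)$ samples produces an $\epsilon$-coreset with high probability. Combining $S = O(1)$ with the pseudo-dimension bound yields the claimed size $\OO(k \log n / \epsilon^2)$.

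The main obstacle is precisely this uniform concentration step --- a naive union bound over all $k$-subsets of $V$ would blow up the sample size by an $\Omega(k \log n)$ factor even after taking a net, and one must instead argue via pseudo-dimension (or an equivalent VC-type argument) that the relevant function class is not much richer than a low-dimensional family despite $V$ being large. Packaging this calculation via a citation to the Feldman--Langberg framework keeps the proof short; the only content-level work left is verifying the sensitivity bound, bounding $S = O(1)$, and bounding the pseudo-dimension, all of which are routine for metric \KME.
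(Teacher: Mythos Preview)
The paper does not prove this theorem at all; it is quoted verbatim from \cite{fl11} (Feldman--Langberg) and used as a black box. Your sketch is exactly the sensitivity-sampling framework of that reference, so at the level of ``which approach'' you are aligned with the cited source rather than diverging from it.

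There is one concrete slip that matters. The total sensitivity is not $O(1)$: summing your bound gives
\[
\sum_{i=1}^{k}\sum_{v\in\hat B_i}\Big(\tfrac{d(v,\hat c_i)}{\hat Z}+\tfrac{1}{|\hat B_i|}\Big)=\tfrac{\hat Z}{\hat Z}+\sum_{i=1}^{k}1 = 1+k,
\]
so $S=\Theta(k)$; this is the well-known tight bound for \KME. Plugging $S=\Theta(k)$ together with your pseudo-dimension estimate of $O(k\log n)$ into the generic Feldman--Langberg bound $m=O(S\cdot\mathrm{pdim}/\epsilon^{2})$ yields $O(k^{2}\log n/\epsilon^{2})$, a factor of $k$ off from the stated $O(k\log n/\epsilon^{2})$. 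The fix is not in the sensitivity bound (which is tight) but in the uniform-concentration step: in a finite metric there are only $n^{k}$ candidate center sets, and the relevant ``dimension'' that enters the FL sample bound can be taken as $O(\log n)$ rather than $O(k\log n)$, which then combines with $S=O(k)$ to give the claimed size. As written, your two ingredients do not multiply out to the target, so you would need to tighten the dimension argument (or cite the precise FL theorem with its parameters) to close the proof.
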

We start by considering a natural LP relaxation and proceed in a similar manner to the algorithm of~\cite{cgkll19}. We first  compute a coreset $V'$ for the point set $V$ of size $\OO(\frac{k\log n}{\epsilon^2})$, using~\Cref{thm:coreset}. It follows that for any choice of $k$ centers $C \subseteq V$, we have $cost(V',C) \in (1 - \epsilon, 1 + \epsilon) cost(V,C)$. Henceforth we focus on minimizing $cost(V',C)$.

For the optimal partition $\{V_1^*, V_2^*, \ldots, V_k^*\}$ of $V'$ and corresponding choice of centers $\{c_1^*, c_2^*, \ldots, c_k^*\}$, we guess for each $i \in [k]$ the leaders $\ell_i$, which are the points in $V_i^*$ closest to $c_i^*$ for each $i \in [k]$. Since the coreset guarantees $|V'| \leq 
\OO(\frac{k\log n}{\epsilon^2})$, this can be accomplished in time $\OO((\frac{k \log n}{\epsilon^2})^k)$. We also guess the distances $R_i^*$ of the leaders $\ell_i$ to the corresponding center $c_i^*$ for each $i \in [k]$, rounded down to the nearest power of $(1 + \epsilon)$. By the argument used in~\cite{cgkll19}, we can assume without loss of generality that the distance between any two points of $C$ is at least $1$ and at most $n^{\OO(1)}$: this means there are only $\OO({\log (\frac{n}{\epsilon})})$ choices for the distances $R_i^*$, which in turn means that this step can be accomplished in time $\OO(({\log \frac{n}{\epsilon}})^k)$. It can be easily shown that $\OO(({\log \frac{n}{\epsilon}})^k)$ and $\OO((\frac{k \log n}{\epsilon^2})^k)$ can be bounded above by $f(k, \frac{1}{\epsilon}) n^{\OO(1)}$ as follows. First, note that it is enough to upper bound $({\log \frac{n}{\epsilon}})^{\OO(k)}$. If $k > \frac{\log \frac{n}{\epsilon}}{\log \log \frac{n}{\epsilon}}$, then this expression is upper bounded by $2^{\OO(k \log k)}$. Otherwise, $k < \frac{\log \frac{n}{\epsilon}}{\log \log \frac{n}{\epsilon}}$, and we get an upper bound of $(\log \frac{n}{\epsilon})^{\OO \left(\frac{\log \frac{n}{\epsilon}}{\log \log \frac{n}{\epsilon}}\right)} \leq (\frac{n}{\epsilon})^{\OO(1)}$.

Once we guess these quantities, let $C_i$ be the subset of $V$ which are at distance at most $R_i^*(1 + \epsilon)$ from $\ell_i$. Notice that we must have $c_i^* \in C_i$. For purely technical reasons, by adding a copy of $\ell_i$ in $C_i$, we will assume that $\ell_i \notin C_i$. Also by making copies of points, we will assume that $C_1, \dots, C_{\ell}$ are disjoint.

Consider the following natural LP relaxation for the problem.

\begin{center}
\noindent\fbox{
  \begin{minipage}{0.95\textwidth}
  \begin{align}
  \min \quad & \sum_{v \in V'} w_v \sum_{c \in V} d(v,c) x_{v, c} \nonumber \\
  \mbox{s.t.} \quad & x_{v,c} \leq y_c && \forall v \in V', c \in V \nonumber \\
  & \sum_{c \in C_i} y_c = 1  && \forall i \in [k] \nonumber \\
  &\sum_{c \in V} x_{v,c} = 1 && \forall v \in V' \nonumber \\
  & x_{v, c} = 0 && \forall v \in V' , i \in [k], c \in C_i \mbox{ such that } d(v,c) < d(\ell_i,c) \label{eq:tooclose}
  \end{align}
  \end{minipage}
       }
\end{center}

Clearly, this LP is a relaxation as witnessed by the canonical integral solution where one can assign each point in $V_i^*$ to (a copy of) $c_i^* \in C_i$.

We solve the LP to obtain a (fractional) optimal solution $(\mathbf{x},\mathbf{y})$ for the LP - we will also assume that $x_{v, c} = y_c$ for each $c \in V$ and each point $v \in V'$ with $x_{v, c} \neq 0$. This can be accomplished using a fairly straightforward construction by   ``splitting centers'' which we describe below. Similar constructions have been used in previous work for other related problems.

Pick a point $c \in V$ for which there is a point $v$ such that $x_{v, c} \neq 0$ and $x_{v, c} \neq y_c$. Let $v_1, v_2 \ldots v_t$ be the points $v$ with $x_{v, c} \neq 0$, taken in non-decreasing order of $x_{v, c}$. Create $t + 1$ many copies $c_1, c_2 \ldots c_{t+1}$ of the point $c$ (with all distances from other points the same as the original point $c$). Set $y_{c_1} = x_{v_1, c}$. Next, set $y_{c_2} = x_{v_2, c} - x_{v_1, c}$, and so on, so that $y_{c_j} = x_{v_j, c} - x_{v_{j-1}, c}$ for all $2 \leq j \leq t$. Set $y_{c_{t+1}} = 1 - \sum_{j = 1}^t y_{c_j}$. Finally, for each point $v_i$, $i \in [t]$, we set $x_{v_i, c_j} = y_{c_j}$ for $j \in \{1,2 \ldots i\}$. It is clear that the construction preserves feasibility and optimality, and $x_{v, c} = y_c$ whenever $x_{v, c} \neq 0$.

Our rounding algorithm is very simple: we pick a threshold $p \in [0,1]$ to be determined later by analysis, and for each $i \in [k]$, we pick $\ell_i$ into $C$ with probability $p$, and with remaining probability $1-p$, we choose exactly one center $c$ from $C_i$ from the distribution given by the LP: more concretely, each center $t \in C_i$ is picked with probability $y_t$. 

\begin{algorithm}[!htbp]
\begin{algorithmic}[!htbp]
\caption{Rounding algorithm}
\label{alg:rounding}
\State \textbf{Input:} \KM{} instance $(V, d, k)$ with coreset $V'$, the sets $C_1, C_2 \ldots C_k$ and leaders $\ell_1, \ell_2 \ldots \ell_k$. A parameter $p \in [0,1]$, and the optimal LP solution $(\mathbf{x},\mathbf{y})$.
\State \textbf{Output:} An approximate solution $C \subseteq V$ with $|C| = k$.
\For{$i \in [k]$}
\State With probability $p$ pick $\ell_i$. 
\State With remaining probability $1-p$, pick exactly one center $t \in C_i$ with probability $y_t$. 
 \EndFor
 \end{algorithmic}

\end{algorithm}

Observe that for every $i \in [k]$, the algorithm opens either a center in $C_i$, or the leader $\ell_i$.

\subsection{Analysis: Casting as factor-revealing problems}
\label{subsec:analysis-setup}
Once we decide on the centers, each point $v \in V'$ can be assigned to the closest open center. 
For the sake of the analysis, we consider the following specific assignment of the points to the open centers; the cost only increases by doing so. 
Henceforth, we fix $v \in V'$. Let $I_v \subseteq [k]$ be the set of indices $i$ for which there is a center $c \in C_i$ with $x_{v, c} \neq 0$, and suppose $|I_v| = \ell$. For the sake of simplicity and clarity of exposition, by re-numbering indices, let us assume that $I_v = [\ell]$. 

For each $i \in [\ell]$, define the \emph{flow} to cluster $i$ as $\mu_i = \sum_{c \in C_i} x_{v, c}$. Let $s_i$ be the weighted average distance to a center $c \in C_i$, where center $c$ is given weight $x_{v, c}$, so that $s_i = \nf{(\sum_{c \in C_i} x_{v, c} d(v,c))}{\mu_i}$.  Assume without loss of generality that $s_i \leq s_j$ whenever $i \leq j$ for $i,j \in [\ell]$. Note that $s_i$ is the \emph{expected} cost of assigning $v$ to a random center $c \in C_i$ according to the probabilities $x_{v, c}$.~\Cref{alg:asgn} describes our assignment algorithm.

\begin{algorithm}[!htbp]
\begin{algorithmic}[!htbp]
\caption{Assignment algorithm}
\label{alg:asgn}
\State \textbf{Input:} \KM{} instance with coreset $V'$ and centers $C$ opened by~\Cref{alg:rounding}, a point $v \in V'$, and the set $I_v = [\ell]$.  
\State \textbf{Output:} An assignment of $v$ to a center of $C$.

%\State Partition the indices $[\ell]$ based on their distance $s_i$ from $v$. 
\State Let $A$ be the set of indices $i \in [\ell]$ with $s_i \leq 1.5s_{1}$.
%\State Let $B$ be the set of indices $i \in [\ell]$ with $s_i \geq 1.5s_1$.
\State Let $D$ be the set of centers $c$ satisfying $x_{v, c} \neq 0$, with $c \in C_i$ so that $s_i \leq 3s_1(1 + \epsilon)$ for some $i \in [\ell]$. 
\If{$D \cap C \neq \emptyset$}
\State Assign $v$ to the center $c \in C_{i^*}$, where $i^*$ is the first index with $C_{i^*} \cap D \cap C \neq \emptyset$. 
\ElsIf{there is a $j \in A$ such that $\ell_j$ is open}
\State Assign $v$ to $\ell_{j^*}$ where $j^*$ is the first index in $A$ such that $\ell_{j^*}$ is open. 
\Else 
\State Assign $v$ to any open center in $C_1$.
 \EndIf
 \end{algorithmic}

\end{algorithm}

First, we have the following lemma which shows that $v$ will be always assigned to a relatively close center. 

\begin{lemma}\label{lemma:dist}
For any $j \in [\ell]$, we must have $d(v, \ell_j) \leq 2s_j$. Consequently, $v$ will be always assigned an open center at distance at most $3s_{1}(1 + \epsilon)$.
\end{lemma}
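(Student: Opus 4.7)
My plan is to prove the first inequality $d(v, \ell_j) \le 2 s_j$ by combining the LP constraint~\eqref{eq:tooclose} with the triangle inequality and an averaging argument, and then to deduce the ``$3 s_1(1+\eps)$'' distance bound by walking through each branch of~\Cref{alg:asgn}.

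For the first inequality, I fix $j \in [\ell]$ and an arbitrary center $c \in C_j$ with $x_{v,c} \ne 0$. Constraint~\eqref{eq:tooclose} forces $d(\ell_j, c) \le d(v, c)$, and then the triangle inequality gives
\[
d(v, \ell_j) \;\le\; d(v, c) + d(c, \ell_j) \;\le\; 2\, d(v, c).
\]
Averaging this pointwise inequality over $c \in C_j$ against the probability weights $x_{v,c}/\mu_j$ (which sum to $1$) yields $d(v, \ell_j) \le 2 s_j$, as claimed.

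For the second statement, I do a case analysis on the three branches of~\Cref{alg:asgn}. In the \textbf{ElsIf} branch, $v$ is assigned to $\ell_{j^*}$ with $j^* \in A$, and the first inequality combined with $s_{j^*} \le 1.5\, s_1$ directly gives $d(v, \ell_{j^*}) \le 2 s_{j^*} \le 3 s_1 \le 3 s_1(1+\eps)$. In the \textbf{If} branch, $v$ is assigned to the unique opened $c \in C_{i^*} \cap D \cap C$ (so $x_{v,c} \ne 0$ and $s_{i^*} \le 3 s_1(1+\eps)$), and I bound $d(v, c)$ via the triangle inequality through $\ell_{i^*}$, using $d(v, \ell_{i^*}) \le 2 s_{i^*}$ from the first part together with the fact that $c \in C_{i^*}$ gives $d(\ell_{i^*}, c) \le R_{i^*}^*(1+\eps)$. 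In the \textbf{Else} branch, the opened $c$ lies in $C_1$, and the same triangle idea through $\ell_1$, together with $d(v, \ell_1) \le 2 s_1$ and $d(\ell_1, c) \le R_1^*(1+\eps)$, provides the bound.

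The main obstacle I foresee is closing the \textbf{If} and \textbf{Else} branches at exactly $3 s_1(1+\eps)$: the naive triangle-inequality bound involves $R_i^*(1+\eps)$, which I must control by $s_i$ (and ultimately by $s_1$). I expect to use the center-splitting procedure, which guarantees $x_{v,c} = y_c$ whenever $x_{v,c} > 0$, together with~\eqref{eq:tooclose} to absorb $R_i^*(1+\eps)$ into the $s_i$-scale, picking up only the $(1+\eps)$ slack coming from rounding $R_i^*$ to the nearest power of $(1+\eps)$; in the \textbf{If} case this uses that the opened $c$ lies in the LP-support of $x_{v,\cdot}$, while in the \textbf{Else} case one also needs to exploit that no center of $D$ and no leader in $A$ was opened in order to identify the scale of $R_1^*$ with $s_1$. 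This careful $(1+\eps)$-bookkeeping between the guessed radius $R_i^*$ and the LP-induced average distance $s_i$ is the fiddly piece of the proof.
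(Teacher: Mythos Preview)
Your argument for the first inequality is correct and is essentially the paper's: both combine the triangle inequality with constraint~\eqref{eq:tooclose}. The only difference is that the paper fixes the single minimizer $c'_j$ (the $c\in C_j$ with $x_{v,c}\neq 0$ minimizing $d(v,c)$) and uses $d(v,c'_j)\le s_j$, which is marginally cleaner than averaging but the same idea.

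For the second part there is a genuine gap. The paper does \emph{not} do a branch-by-branch analysis of~\Cref{alg:asgn}; it simply specializes to $j=1$ and shows that whatever~\Cref{alg:rounding} opens in $C_1\cup\{\ell_1\}$ lies within $3s_1(1+\eps)$ of $v$. If $\ell_1$ is opened, the first part already gives $d(v,\ell_1)\le 2s_1$. Otherwise some $c_1\in C_1$ is opened, and the step you are missing is
\[
d(\ell_1,c_1)\;\le\; d(\ell_1,c'_1)(1+\eps)\;\le\; d(v,c'_1)(1+\eps)\;\le\; s_1(1+\eps),
\]
where the first inequality uses the definition of $C_1$, the second uses~\eqref{eq:tooclose} applied to $c'_1$, and the third uses that $c'_1$ is the minimizer. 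Adding $d(v,\ell_1)\le 2s_1$ via the triangle inequality then handles the Else branch. The link between the radius of $C_1$ and $s_1$ is entirely deterministic; your plan to deduce it from ``no center of $D$ and no leader in $A$ was opened'' cannot work, since $R_1^*$ and $s_1$ are fixed before any random choices are made.

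Your If-branch plan also fails: routing through $\ell_{i^*}$ yields at best $d(v,c)\le 3s_{i^*}(1+\eps)$, and since $s_{i^*}$ may be as large as $3s_1(1+\eps)$ this only gives $9s_1(1+\eps)^2$. Indeed the literal distance to the assigned center in the If branch can exceed $3s_1(1+\eps)$ (only its \emph{conditional expectation} is $s_{i^*}$), so the ``always assigned'' phrasing of the lemma is slightly loose there; what the paper actually establishes, and all that is used downstream (case~3 in the definition of $cost(v)$), is the Else-branch bound obtained from $j=1$.
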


\begin{proof}
Fix any index $j \in [\ell]$. Let $c'_{j} \in C_{j}$ be a point in $C_j$ which minimizes $d(v, c)$ among all points $c \in C_j$ with $x_{v, c} \neq 0$. Clearly, $d(v,c_j') \leq s_j$. By the triangle inequality, we have $d(v, \ell_j) \leq d(v,c_j') + d(c_j', \ell_{j})$. Since $x_{v, c_j'} \neq 0$, the LP constraint~\eqref{eq:tooclose} ensures $d(v, c_j') \geq d(c_j', \ell_j)$. Thus $d(v, \ell_j) \leq 2d(v,c_j') \leq 2s_j$.

Observe that~\Cref{alg:rounding} ensures that either $\ell_{j}$ is open, or some center $c_{j} \in C_{j}$ is open. Also $d(\ell_j,c_j) \leq d(\ell_j,c_j')(1 + \epsilon) \leq d(v, c_j')(1 + \epsilon) \leq s_j(1 + \epsilon)$, where the first inequality follows by the definition of $C_j$, and the second inequality again follows from the LP constraint~\eqref{eq:tooclose}. Together, we obtain $d(v, c_j) \leq d(v,\ell_j) + d(\ell_j, c_j) \leq 2s_j + s_j(1 + \epsilon) = 3s_{j}(1 + \epsilon)$. Since this holds for any $j$, and in particular when $j = 1$, it is clear from the description of~\Cref{alg:asgn} that $v$ will be always assigned an open center at distance at most $3s_{1}(1 + \epsilon)$ and the result follows.
\end{proof}

Let $LP(v) := \sum_{c \in V} x_{v, c} d(v, c)$ be the connection cost of $v$ in the LP.
For each case of the assignment given by~\Cref{alg:asgn}, let us define $cost(v)$ to be the following upper bound on $d(v, C)$: 

\begin{enumerate}
\item Some center $c \in D$ is open. In this case, the point goes to the open center $c_{i*} \in D \cap C_{i*}$, where $i^*$ is the minimum index for which there is a center open in $D \cap C_{i^*}$. Given that there is a center open in $D \cap C_{i^*}$, observe that~\Cref{alg:rounding} opens each center $c \in D \cap C_{i^*}$ with probability proportional to $y_c$. 
Thus the expected connection cost is exactly $s_{i^*}$. Let $cost(v) = s_{i^*}$ in this case. 

\item No center in $D$ is open, but some leader $\ell_j$ is open, for some $j \in A$. The point $v$ is then assigned to the leader $\ell_{j^*}$, where $j^* \in A$ is the minimum index $j^*$ so that $\ell_{j^*}$ is open.  Using ~\Cref{lemma:dist}, the connection cost for $v$ is upper bounded by $2s_{j^*}$. Let $cost(v) = 2s_{j^*}$ in this case. 

\item Any other scenario. Again, using~\Cref{lemma:dist}, we upper bound the cost by $3s_{1}(1 + \epsilon)$. Let $cost(v) = 3(1+\eps)s_{1}$ in this case. 

\end{enumerate}

The rest of the section will be devoted to upper bounding the ratio $\E[cost(v)] / LP(v)$, which will yield \Cref{thm:algo}. 

\begin{theorem}
For every $v$, the ratio $(\E[cost(v)]/ LP(v)) \leq \APXs(1 + O(\eps)) \approx \APX(1 + O(\eps))$. 
\label{thm:lp-cost}
\end{theorem}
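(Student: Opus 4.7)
The plan is to write $\E[cost(v)]$ explicitly as a function of the cluster parameters $\{\mu_i, s_i\}_{i \in [\ell]}$ produced by the LP, and then show via a sequence of factor-revealing optimization problems that the ratio $\E[cost(v)]/LP(v)$ is maximized by a highly regular configuration whose value matches the target expression. Concretely, for each cluster $i$, the probability that some fractionally-connected center in $C_i$ is opened by \Cref{alg:rounding} is $(1-p)\mu_i$ (since, conditioned on the event of probability $1-p$, the LP distribution $\{y_c\}_{c \in C_i}$ has mass exactly $\mu_i$ on the centers with $x_{v,c} \neq 0$), while the leader $\ell_i$ opens independently with probability $p$, these two options being mutually exclusive. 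Using this, I would express case (1) as a sum over $i^* \in B$ of $s_{i^*}$ times the probability that $i^*$ is the smallest index in $B$ for which an LP-connected center in $C_{i^*}$ opens, case (2) as an analogous sum of $2 s_{j^*}$ over $j^* \in A$ conditioned on no $D$-center opening, and case (3) as $3(1+\eps)s_1$ times the remaining probability.

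Step two is to divide by $LP(v) = \sum_i \mu_i s_i$ and recognize the resulting quantity as the value of a maximization program over the variables $(\mu_1,\ldots,\mu_\ell)$ and $(s_1,\ldots,s_\ell)$, subject to the ordering $s_1 \leq \cdots \leq s_\ell$ and $\sum_i \mu_i = 1$ (the LP assigns a total of one unit of $v$'s demand). The structure of the numerator only depends on the index sets $A$ and $B$ (which are determined by the $s_i$'s) and the products $\prod (1-(1-p)\mu_j)$ over initial segments. Subsections 2.3 and 2.4 of the paper handle two relaxations of this program: one in which $\{s_i\}$ is fixed and we optimize over $\{\mu_i\}$, and one in which the roles are reversed. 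In each, I would use convexity in each coordinate (the probabilities are products, hence log-linear in the $\mu_j$'s) together with a rank/extreme-point argument to force the optimum to lie at a vertex where only two distinct values of $s_i$ appear (namely $s_1$ and the threshold $3(1+\eps)s_1$) and where all ``relevant'' $\mu_i$'s are equal to a common $1/d$ for some integer $d \geq 1$.

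Once this reduction is carried out, the worst case has $d$ clusters with $\mu_i = 1/d$, $s_i = s_1$, contributing an independent pair of events per cluster: an LP-connected center opens with probability $(1-p)/d$, the leader opens with probability $p$, and neither with probability $(1-p)(d-1)/d = 1 - (pd+(1-p))/d$. A direct calculation of the three case-probabilities yields
\[
\frac{\E[cost(v)]}{LP(v)} \leq 1 + \Bigl(1 - \tfrac{1-p}{d}\Bigr)^{d} + \Bigl(1 - \tfrac{pd+(1-p)}{d}\Bigr)^{d} + O(\eps),
\]
exactly the inner expression in \Cref{thm:algo}. Taking the maximum over $d \in \Z_{\geq 1}$ (for the adversary) and then, since the algorithm chooses $p$ up front, taking the minimum over $p \in [0,1]$ completes the proof, giving $\APXs(1+O(\eps)) \approx \APX(1+O(\eps))$; the numerical optimum is attained at $p^* = (10-6\sqrt{2})/7$ and $d^* = 3$.

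The main obstacle is Step two's rank/convexity argument: the sets $A$ and $B$ are themselves functions of the $s_i$'s, so one cannot naively fix them and optimize. I would handle this by fixing the combinatorial type (i.e., which indices lie in $A$ and in $B \setminus A$) and optimizing within each type, showing that at the optimum each cluster either sits at $s_i = s_1$ (``tight'', and hence in $A$) or has $s_i$ large enough to be excluded from $B$ altogether, because any intermediate $s_i$ can be decreased to $s_1$ or increased past $3(1+\eps)s_1$ without decreasing the ratio. The remaining symmetrization among the tight clusters is then a straightforward application of the AM-GM / Schur-convexity of $\prod(1-(1-p)\mu_j)$ when $\sum \mu_j$ is fixed.
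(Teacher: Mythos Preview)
Your high-level plan is sound and matches the paper's, but there is a genuine gap in the reduction step. The assignment rule in \Cref{alg:asgn} has \emph{two} distance thresholds, $1.5s_1$ (which determines $A$, the set of leaders usable in case~(2)) and $3s_1(1+\eps)$ (which determines $D$). Hence when you treat $\E[cost(v)]$ as a linear function of the $s_i$'s (which is the correct first move), the LP over $(s_i)$ has ordering constraints broken at both thresholds, and its extreme points have $s_i \in \{s_1,\,1.5s_1,\,3s_1(1+\eps)\}$, not just two values. Your proposed resolution of ``push each $s_i$ to $s_1$ or past $3(1+\eps)s_1$'' does not work: the paper first argues WLOG that $s_i \leq 3s_1(1+\eps)$ for every $i$ (this only increases the ratio), so nothing can be pushed beyond, and the $1.5s_1$ boundary is a genuine breakpoint of the piecewise-linear structure. (Also, case~(1) ranges over all $i^*\in[\ell]$ after that WLOG, not just $i^*\in B$.)

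The harder missing piece is what happens \emph{after} the three-value reduction. With groups $A',B',C'$ corresponding to distances $1,1.5,3(1+\eps)$, the paper writes $cost'(v)$ in terms of $p_{A'},p_{B'},p_{C'},q_{A'}$, uses AM--GM within each group (your Schur argument) and sends $d_{B'}\to\infty$, $d_{C'}=1$; but it still remains to eliminate $B'$ and $C'$. That is not a symmetry argument: the paper shows $\zeta$ is convex in $\mu_C$ (forcing $\mu_C\in\{0,1\}$, with $\mu_C=1$ ruled out by a value check), and then proves $\partial\psi/\partial\mu_A>0$ at $p=p^*$ by an explicit computation that is different for $d_A\geq 3$ versus $d_A\in\{1,2\}$. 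Only after this does one land on the single-group configuration with equal $\mu_i=1/d$ and the expression $g(p,d)=1+(1-\tfrac{1-p}{d})^d+((1-p)(1-\tfrac1d))^d$. Your plan jumps directly to that endpoint without supplying this elimination step, and the ``any intermediate $s_i$ can be moved to an endpoint'' heuristic does not substitute for it.
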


As in~\Cref{alg:asgn}, let $A$ be the set of indices $i \in [\ell]$ with $s_i \leq 1.5s_{1}$, and let $B$ be the set of indices $i \in [\ell]$ with $s_i > 1.5s_1$.
Fix the ``degrees'' $d_A = |A|$, $d_B = |B|$, the ``flows'' $\mu_A = \sum_{i \in A} \mu_i$, $\mu_B = \sum_{i \in B} \mu_i$. 
By our previous assumption, since $s_1 \leq s_2 \leq s_3 \ldots \leq s_{\ell}$, it follows that $A = [|d_A|]$ and $B = [\ell \setminus |d_A|]$.

Both $\E[cost(v)]$ and $LP(v)$ will be entirely determined by $p, d_A, d_B, (\mu_i)_{i \in [\ell]}$, and $(s_i)_{i \in [\ell]}$. 
So our main question is: for a given $p$ (that our algorithm can control), 
which values for $d_A, d_B, (\mu_i)_{i \in [\ell]}$, and $(s_i)_{i \in [\ell]}$ will maximize the ratio $\E[cost(v)]/ LP(v)$? 

It is simple to see that $LP(v) = \sum_{i \in [\ell]} \mu_i s_i$, 
while $\E[cost(v)]$ is a complex function of $d_A, d_B, (\mu_i)_{i \in [\ell]}$, and $(s_i)_{i \in [\ell]}$. 
We will show that the values for these variables maximizing the ratio $\E[cost(v)] / LP(v)$ must exhibit a certain structure, which finally leads to the desired upper bound. 
This will be achieved via a series of factor-revealing programs that maximize the ratio over a subset of variables while the others are considered fixed. The analysis is spread across~\Cref{subsec:s}, \Cref{subsec:mu} and \Cref{subsec:final}.

\subsection{Simplifying distances}
\label{subsec:s}
We will first consider such a program where $p, \mu_1, \dots, \mu_{\ell}, d_A, d_B$ are fixed and the variables are $(s_i)_{i \in [\ell]}$.
Henceforth, we assume without loss of generality that $s_i \leq 3s_1(1 + \epsilon)$ for each $i \in [\ell]$. If this is not satisfied for some $i \in [\ell]$, then for every $c \in C_i$ with $x_{v,c} \neq 0$ such that $d(v,c) \geq 3s_1(1 + \epsilon)$, we set $d(v,c) = 3s_1(1 + \epsilon)$. First, observe that this operation can only decrease $LP(v)$. Also, notice that by the description of~\Cref{alg:asgn}, this change can only increase $\E[cost(v)]$. To see this, fix any set of centers opened by~\Cref{alg:rounding}. For this fixed set of open centers, $cost(v)$ changes only if some center of $C_i$ is open, in which case~\Cref{alg:asgn} may assign $v$ to a center of $C_i$, which may increase $cost(v)$. It follows that the ratio $\E[cost(v)]/LP(v)$ does not decrease after this operation.\\
The following lemma shows that $\E[cost(v)]$ is a linear function of $(s_i)_{i \in [\ell]}$.

\begin{restatable}{lemma}{distancelemma}
$\E[cost(v)] = \sum_{i=1}^{\ell} \gamma_i s_i$ where each $\gamma_i$ is a function of $p,d_A,d_B,(\mu_i)_{i \in [\ell]}$ only. 

\label{lem:linear_s}
\end{restatable}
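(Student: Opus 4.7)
The plan is to decompose $\E[cost(v)]$ according to the three cases of \Cref{alg:asgn} and argue that in every case, the cost contribution is of the form $\beta \cdot s_i$ for some $i \in [\ell]$ and some constant $\beta$ (namely $1$, $2$, or $3(1+\eps)$), while the probability of landing in that case depends only on $p$, $d_A$, $d_B$ and the flows $(\mu_i)_{i \in [\ell]}$. Summing over all possibilities then yields the desired linear combination in the $s_i$'s.

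The first step is a per-cluster probability computation. By \Cref{alg:rounding}, the events in different clusters are independent. Moreover, using the assumption $x_{v,c} = y_c$ whenever $x_{v,c} \neq 0$, the total $y$-mass on $\{c \in C_i : x_{v,c} \neq 0\}$ equals $\mu_i$. Hence, for each $i \in [\ell]$, independently of all other clusters:
\begin{itemize}
\item A center in $D \cap C_i$ opens with probability $(1-p)\mu_i$;
\item The leader $\ell_i$ opens with probability $p$;
\item Some center in $C_i \setminus D$ opens with probability $(1-p)(1-\mu_i)$.
\end{itemize}
Critically, since we have reduced to $s_i \leq 3s_1(1+\eps)$ for every $i \in [\ell]$, the set $D$ contains every $c$ with $x_{v,c} \neq 0$ lying in some $C_j$ for $j \in [\ell]$, so these three probabilities do not involve any $s_j$.

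The second step is to write down the three case contributions. Conditioned on case~1 with $i^* = i$, the definition of $cost(v)$ gives $s_i$; this event occurs with probability
\[
\Pr[i^* = i] \;=\; (1-p)\mu_i \cdot \prod_{j=1}^{i-1}\bigl(1-(1-p)\mu_j\bigr),
\]
contributing $s_i$ times a quantity that only depends on $p$ and $(\mu_j)_{j \leq i}$. For case~2 with $j^* = j \in A$, the cost is $2 s_j$, and by independence across clusters the probability factors as
\[
p \cdot \prod_{k \in A,\, k<j}(1-p)(1-\mu_k) \cdot \prod_{\substack{k \in [\ell]\\ k \neq j,\, k \notin\{k' \in A: k' < j\}}}\bigl(1-(1-p)\mu_k\bigr),
\]
which again is a function of $p$, $d_A$, $d_B$ and $(\mu_k)_{k \in [\ell]}$ alone. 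Finally, case~3 contributes $3(1+\eps) s_1$ with a probability obtained as the complement of the above, and this complement is also a function of the same variables only.

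Putting the three cases together gives
\[
\E[cost(v)] \;=\; \sum_{i=1}^{\ell} \gamma_i s_i,
\]
where, for $i \in B$, $\gamma_i$ collects only the case~1 contribution; for $i \in A \setminus \{1\}$, $\gamma_i$ collects the case~1 and case~2 contributions; and $\gamma_1$ additionally collects the case~3 contribution with the multiplier $3(1+\eps)$. Each coefficient $\gamma_i$ is built entirely from products of the per-cluster probabilities listed above and the constants $1$, $2$, $3(1+\eps)$, so it depends only on $p$, $d_A$, $d_B$ and $(\mu_j)_{j \in [\ell]}$, as claimed. The only subtlety, and thus the main bookkeeping obstacle, is ensuring that in case~2 the conditioning on ``no $D$-center opens anywhere'' is handled correctly across all $\ell$ clusters; this is resolved cleanly by the independence of clusters and the factorization written above.
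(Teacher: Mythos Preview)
Your proposal is correct and follows essentially the same approach as the paper: both arguments compute the per-cluster probabilities $(1-p)\mu_i$, $p$, and $(1-p)(1-\mu_i)$, then use independence across clusters to write down the probabilities $z_i$ (your case~1), $\lambda_j$ (your case~2), and the residual $\gamma$ (your case~3), and collect the resulting coefficients of each $s_i$. Your observation that the preceding reduction to $s_i \leq 3s_1(1+\eps)$ makes $D$ independent of the $s_i$'s is exactly the key point the paper uses (implicitly) when it equates $\sum_{c \in D \cap C_i} x_{v,c}$ with $\mu_i$.
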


\begin{proof}
Consider some $i \in [\ell]$. The probability that~\Cref{alg:rounding} decides to open $\ell_i$ is exactly $p$. With remaining probability $1-p$, exactly one center of $C_i$ is opened, and each center $c \in C_i$ is chosen with probability $y_c = x_{v, c}$. Thus the probability that some $c \in D \cap C_i$ is open is exactly $(1-p)\sum_{c \in (D \cap C_i)}{y_c} = (1-p)\sum_{c \in (D \cap C_i)}{x_{v, c}} = (1-p)\sum_{c \in C_i} x_{v, c} = (1-p)\mu_i$.

If a center of $D$ is open,~\Cref{alg:asgn} assigns $v$ to an open center in $D \cap C_{i^*}$ where $i^*$ is the first index $i^*$ for which there is an open center in $D \cap C_{i^*}$. For each $i \in [\ell]$, $i = i^*$ if and only if for every $1 \leq i' < i$,~\Cref{alg:rounding} does not choose a center in $D \cap C_{i'}$, and~\Cref{alg:rounding} chooses some center in $D \cap C_i$. Since the choices made by~\Cref{alg:rounding} are independent across different $i' \in [\ell]$, it follows this probability is $z_i = (\prod_{1 \leq i' < i} (1 - (1-p)\mu_{i'}))(1-p)\mu_i$, and hence depends only on $p, \mu_i, i \in [\ell]$. In this case, recall that we upper bound the cost by $s_i$.

When no center of $D$ is open,~\Cref{alg:asgn} assigns $v$ to $\ell_{j*}$ where $j^*$ is the first index $j \in A$ for which $\ell_j$ is open. For any fixed $j \in [\ell]$, we claim that the probability $\lambda_j$ that no center of $D$ is open and $j = j^*$ depends only on $p, \mu_i, i \in [\ell]$ and $d_A$. To see this, first note that if $j \in B$, then $j \neq j^*$, since the algorithm never assigns points to $l_j$, and hence $\lambda_j = 0$. If $j \in A$, the event that $j = j^*$ and no center of $D$ is open happens only when for every $1 \leq j' < j$,~\Cref{alg:rounding} does not choose $\ell_{j'}$ or a center of $D \cap C_{j'}$, ~\Cref{alg:rounding} chooses $\ell_j$, and for $j < j'\leq \ell$ no center in $D \cap C_{j'}$ is opened. This probability is exactly $\lambda_j = (\prod_{1 \leq j' < j} (1-p)(1-\mu_{j'})) \cdot p \cdot (\prod_{j < j'\leq \ell} (1 - (1-p)\mu_{j'})$. In this case, recall that we upper bound the cost by $2s_j$.

Finally, if no center of $D$ is open and no leader $\ell_j$ is open for $j \in A$,~\Cref{alg:asgn} assigns $c$ to an open center in $C_1$. This happens with probability $\gamma = \prod_{j \in A} (1-p)(1 - \mu_j) \prod_{j \in B} (1 - (1-p) \mu_j)$, and the cost is upper bounded by $3(1 + \eps)s_1$.

Now observe that for every $j \geq 2$, the coefficient $\gamma_j$ of $s_j$ is exactly $z_j + 2\lambda_j$, and $\gamma_1 = z_1 + 2\lambda_1 + 3(1+\eps)\gamma$. 

This allows us to conclude that each $\gamma_i, i \in [\ell]$ is a function of $p,d_A, d_B$ and $(\mu_i)_{i\in [\ell]}$ alone.
\end{proof}

Recall that we assumed that $s_1 \leq s_2 \leq s_3 \leq \ldots \leq s_{\ell = d_A + d_B}$. 
Then finding the worst case ratio for the point $v$ over $(s_i)_{i \in [\ell]}$ can be cast as the following optimization problem whose variables are the distances $(s_i)_{i \in [\ell]}$, which we refer to as $O$:

\begin{center}
\noindent\fbox{

  \begin{minipage}{0.95\textwidth}
  \begin{align*}
  \max \quad & \frac{\sum_{i = 1}^{\ell} \gamma_i s_i}{\sum_{i = 1}^{\ell} \mu_{i} s_{i}} \\ 
  \mbox{s.t.} \quad & s_{1} \leq s_{2} \leq s_{3} \leq  \ldots \leq  s_{{d_A}} \leq 1.5s_{1} \\
  & 1.5s_{1} \leq s_{d_A + 1} \leq s_{{d_A + 2}} \leq  \ldots \leq s_{{d_A + d_B}} \leq 3s_{1}(1 + \epsilon) \\
  & s_1 > 0
  \end{align*}
  
  \end{minipage}
       }
\end{center}
(Note that we allow $s_i$ for $i \in B$ to take the value exactly $1.5 s_1$, which is not allowed in the definition of $B$. However, it only increases the optimal value.)

It is now clear that this optimization problem $O$ (where we fix $p,d_A,d_B, ( \mu_i )_{i \in [\ell]}$) is a linear program with the variables $(s_i)_{i \in [\ell]}$, for we can simply add the additional constraint $\sum_{i = 1}^{\ell} \mu_{i} s_{i} = 1$ and maximize the numerator of the objective. Since any linear program on $\ell$ variables has an optimal solution that is an extreme point where exactly $\ell$ linearly independent constraints are tight, this allows us to greatly simplify the structure of optimal solutions.

\begin{restatable}{lemma}{lemmaextremepoint}
There is an optimal solution to $O$ which satisfies $s_{i} \in \{s_1, 1.5s_1, 3s_1\}$ for each $i \in [\ell]$.
\end{restatable}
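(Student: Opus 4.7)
My plan is to observe that $O$ is a linear fractional program in the variables $s_1, \ldots, s_\ell$ that can be converted into a standard LP. Since both the numerator and denominator of the objective (as well as all the ordering constraints) are homogeneous of degree one in the $s_i$'s, we can pin down scale by adding the constraint $\sum_{i=1}^\ell \mu_i s_i = 1$ and simply maximize the numerator $\sum_{i=1}^\ell \gamma_i s_i$. The feasible region then becomes a bounded polytope --- the inequality $s_\ell \leq 3(1+\eps)s_1$ combined with the normalization prevents unboundedness --- so the optimum is attained at a vertex. By standard LP theory, a vertex is witnessed by $\ell$ linearly independent tight constraints. One of these is the normalization equality, so exactly $\ell-1$ of the inequality constraints must be tight; equivalently, among all the inequalities at most two are slack.

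Counting the inequalities: there are $d_A - 1$ chain constraints $s_j \leq s_{j+1}$ inside $A$, one boundary $s_{d_A} \leq 1.5 s_1$, one boundary $1.5 s_1 \leq s_{d_A+1}$, $d_B - 1$ chain constraints inside $B$, and the final boundary $s_\ell \leq 3(1+\eps)s_1$, giving $\ell+1$ in total. The next step is to argue that at least one inequality on each side of $1.5 s_1$ must be slack. If the entire $A$-chain $s_1 \leq s_2 \leq \cdots \leq s_{d_A} \leq 1.5 s_1$ were tight, the collapsed equalities would force $s_1 = 1.5 s_1$ and hence $s_1 = 0$; propagating through $1.5 s_1 \leq s_i \leq 3(1+\eps) s_1$ would then force every $s_i = 0$, violating the normalization. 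By the symmetric argument the $B$-chain cannot be entirely tight either. Since the total slack is at most two, exactly one inequality is slack in each of the $A$ and $B$ chains.

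To finish, I observe that a single slack edge in the chain $s_1 \leq s_2 \leq \cdots \leq s_{d_A} \leq 1.5 s_1$ partitions its variables into two contiguous blocks of common value: a prefix all equal to $s_1$ and a suffix all equal to $1.5 s_1$ (one block possibly empty). The analogous statement for the $B$-chain gives a partition into blocks of value $1.5 s_1$ and $3(1+\eps) s_1$. Absorbing the $(1+\eps)$ factor into the overall $(1 + O(\eps))$ loss yields $s_i \in \{s_1, 1.5 s_1, 3 s_1\}$ for every $i$, which is the desired conclusion. The step I expect to require the most care is verifying that the $\ell-1$ tight chain inequalities, together with the normalization, are genuinely linearly independent so that they actually witness a vertex; this holds because each tight sub-chain traces out a path whose variable-incidence rows form an independent set, and the inhomogeneous normalization equation cannot be a linear combination of homogeneous equalities among the $s_i$'s.
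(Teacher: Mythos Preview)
Your proof is correct and follows essentially the same approach as the paper: normalize the denominator to $1$, pass to an extreme point, count that at most a constant number of the $\ell+1$ chain inequalities can be slack, and rule out a fully tight chain on either side of $1.5s_1$ because it would force $s_1=0$. The only cosmetic difference is that the paper explicitly carries the redundant constraint $s_1 \geq 0$ (and then argues it cannot be tight), whereas you correctly observe it is implied by the chain together with the normalization; your concluding remark about verifying linear independence is unnecessary, since you have already started from a vertex rather than trying to certify one.
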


\begin{proof}
Consider the following linear program which is equivalent to $O$.
\begin{center}
\noindent\fbox{

  \begin{minipage}{0.95\textwidth}
  \begin{align}
  \max \quad & {\sum_{i = 1}^{\ell} \gamma_{i} s_{i}} \nonumber \\
  \mbox{s.t.} \quad & {\sum_{i = 1}^{\ell} \mu_{i} s_{i}} = 1 \label{eq:extreme_zero} \\
  & s_{1} \leq s_{2} \leq s_{3} \leq \ldots \leq s_{{d_A}} \leq 1.5s_{1} \label{eq:extreme_one} \\
  & 1.5s_{1} \leq s_{d_A + 1} \leq s_{d_A + 2} \leq \ldots \leq s_{d_A + d_B} \leq 3s_{1}(1 + \epsilon) \label{eq:extreme_two} \\
  & s_1 \geq 0  \label{eq:extreme_three}
  \end{align}
  \end{minipage}
       }
\end{center}
Let $(s_i^*)_{i \in [\ell]}$ be an \emph{extreme point optimal solution} for this linear program. We claim that $s_i^* \in \{s_1, 1.5s_1, 3s_1(1 + \epsilon)\}$ for each $i \in [\ell]$.
We show this as follows. First, there are a total of $\ell = d_A + d_B$ variables in the linear program. An extreme point, therefore, must be the intersection of $\ell$ linearly independent constraints. There are a total of $\ell + 1$ constraints captured by~\eqref{eq:extreme_one} and~\eqref{eq:extreme_two}, and two more constraints~\eqref{eq:extreme_zero} and~\eqref{eq:extreme_three}. At an extreme point, at most $3$ of these constraints may not be tight. The constraint $s_1 > 0$ cannot be tight, since it forces the objective to be zero, which is clearly not optimal. Also, every constraint in the family constraints~\eqref{eq:extreme_one} cannot be tight, similarly, every constraint in the family of constraints~\eqref{eq:extreme_two} cannot be tight, since that would imply $s_1 = 0$. It follows that exactly one constraint among the constraints~\eqref{eq:extreme_one} is not tight, and exactly one constraint among the constraints~\eqref{eq:extreme_two} is not tight. But this means that $s_i^* \in \{s_1, 1.5s_1, 3s_1(1 + \epsilon)\}$ for each $i \in [\ell]$, which proves the result.
\end{proof}

\subsection{Simplifying LP values and degrees}
\label{subsec:mu}

Once we know that the worst-case ratio must involve distances from $\{s_1, 1.5s_1, 3s_1(1 + \epsilon)\}$, we proceed as follows. 

Without loss of generality, assume that $s_1 = 1$. Let $A',B'$ and $C'$ denote the set of indices $i \in [\ell]$ which have $s_i = 1, 1.5$ and $3(1 + \epsilon)$ respectively. We can now compute the worst case upper bound $\E[cost(v)]$ for the point $v$ in this special case as follows. For some distance parameter $\rho$, let $t_{\rho} := \Pr[cost(v) \geq \rho]$.
We say that $v$ has a \emph{direct connection} in a set $I' \subseteq [\ell]$ if there exists a center $c \in C_i \cap D$ for some $i \in I'$ which is open. Further we say that a leader of $I'$ is open if there exists an $i \in I'$ such that the leader $\ell_i$ is open. With these definitions, we have the following observations.

\begin{lemma} The following equations hold. 
\vspace{-5pt}
\begin{align*}
t_{1} =& 1 \\
t_{1.5} =& \Pr[\text{No direct connection in $A'$}]\\
t_{2} =& \Pr[\text{No direct connection in $A' \cup B'$}]\\
t_{3}  =& \Pr[\text{No direct connection in $A' \cup B'$, but direct connection in $C'$}] +\\ &\Pr[\text{No direct connection in $A' \cup B' \cup C'$, and no leader of $A'$ is open}].
\end{align*}
\end{lemma}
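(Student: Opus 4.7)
The plan is to trace through~\Cref{alg:asgn} carefully, noting that after the simplifications of~\Cref{subsec:s} the only distances appearing are $s_i \in \{1, 1.5, 3(1+\epsilon)\}$ with $s_1 = 1$, so $cost(v)$ takes values in the finite set $\{1, 1.5, 2, 3, 3(1+\epsilon)\}$. Because $A = \{i : s_i \leq 1.5 s_1\} = A' \cup B'$ and (by the assumption $s_i \leq 3 s_1(1+\epsilon)$ for all $i$) the set $D$ consists of exactly the centers $c$ with $x_{v,c} \neq 0$ in $\bigcup_{i \in A' \cup B' \cup C'} C_i$, the event ``some center in $D \cap C_i$ is open for some $i \in I'$'' is exactly what is called a direct connection in $I'$. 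Each $t_\rho$ is then the probability of a disjoint union of easily identifiable sub-events of the three assignment cases, so the proof is a systematic case analysis.

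For $t_1$, I would simply observe that in every case of~\Cref{alg:asgn} one has $cost(v) \geq 1$: in Case 1, $cost(v) = s_{i^*} \geq s_1 = 1$; in Case 2, $cost(v) = 2 s_{j^*} \geq 2$; in Case 3, $cost(v) = 3(1+\epsilon) > 1$. For $t_{1.5}$, I would argue that $cost(v) < 1.5$ happens exactly when Case 1 applies with $i^* \in A'$ (since Cases 2 and 3 always give value $\geq 2$, and $s_{i^*} < 1.5$ forces $i^* \in A'$), which is equivalent to having a direct connection in $A'$. The same reasoning for $t_2$ gives $cost(v) < 2$ iff Case 1 applies with $i^* \in A' \cup B'$, i.e., iff there is a direct connection in $A' \cup B'$.

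The only step that requires a small combination argument is $t_3$. Here $cost(v) \geq 3$ occurs in three disjoint sub-events: (i) Case 1 with $i^* \in C'$, contributing value $3(1+\epsilon)$ and corresponding to ``no direct connection in $A' \cup B'$ but a direct connection in $C'$''; (ii) Case 2 with $j^* \in B'$, contributing value $2 \cdot 1.5 = 3$ and requiring no direct connection at all, no leader of $A'$ open, and some leader of $B'$ open (since $j^*$ is the first open leader in the ordered set $A' \cup B'$); and (iii) Case 3, contributing $3(1+\epsilon)$ and requiring no direct connection and no leader of $A' \cup B'$ open. I would then observe that the events (ii) and (iii) together, being the disjoint union over whether some leader of $B'$ is open or not, coincide with ``no direct connection in $A' \cup B' \cup C'$ and no leader of $A'$ open''. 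Adding to (i) yields the stated expression.

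The main (and only) subtlety is realizing that Case 2 does contribute to $t_3$ when $j^* \in B'$ (giving exactly $cost(v) = 3$, not less), and that the two terms for Case 2 with $j^* \in B'$ and Case 3 telescope cleanly into the single event involving $A' \cup B' \cup C'$ and no leader of $A'$ open. Once this is spotted, the rest is bookkeeping.
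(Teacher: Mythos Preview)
Your proposal is correct and follows essentially the same case analysis as the paper's proof. The only difference is one of explicitness: for $t_3$ you split into three sub-events (Case~1 with $i^*\in C'$, Case~2 with $j^*\in B'$, and Case~3) and then merge the last two, whereas the paper directly names the two resulting scenarios; your version makes the ``if and only if'' direction more transparent, but the underlying argument is identical.
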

\begin{proof}
The point $v$ always pays a cost at least $1$, so the ``baseline'' cost is $1$, which is paid with probability $1$. If there is no direct connection to any center in $A'$, then the point must clearly pay at least $1.5$ (possibly using a direct connection to an open center of $B'$). If there is no direct connection to centers in $A' \cup B'$, the point must pay at least $2$ (possibly using a connection to an open leader of $A$). Finally, the point can pay $3$ or $3(1 + \epsilon)$ in two different ways. The first scenario is when there is no direct connection to centers in $A' \cup B'$, but a direct connection to a center of $C'$. Recall that~\Cref{alg:asgn} always assigns a point to a center whenever a direct connection is open, therefore in this case the point gets assigned to some center of $C'$ at  distance $3(1 + \epsilon)$. The second scenario is when there is no direct connection in $A' \cup B'\cup C'$, and no leader of $A'$ is open. In this case the point gets connected at distance $\geq 3$ (for instance it may be assigned to a leader of $B'$).
\end{proof}
Let $cost'(v) = t_{1} + 0.5t_{1.5} + 0.5 t_{2} + t_{3}$. Then it is clear that $cost'(v)(1 + \epsilon) \geq \E[cost(v)]$, because $cost(v)$ is always between $1$ and $3(1+\eps)$, and $cost'(v) = \int_{x=0}^3 \Pr[cost(v) \geq x] \,dx$. 
It remains to analyze and upper bound these probabilities. Let 
\begin{align*}
p_{A'} &= \text{Pr[No direct connection in $A'$]} \\
p_{B'} &= \text{Pr[No direct connection in $B'$]} \\
p_{C'} &= \text{Pr[No direct connection in $C'$]} \\
q_{A'} &= \text{Pr[No direct connection in $A'\; \land $ No leader open in $A'$]}.
\end{align*}
Then by the above discussion and definitions we have $t_{1.5} = p_{A'}$, $t_{2} = p_{A'}p_{B'}$, and $t_{3} = (1-p_{C'}) p_{A'} p_{B'} + p_{C'} q_{A'} p_{B'}$, which yields
\[
cost'(v) = 1 + 0.5 p_{A'}  + 0.5 p_{A'} p_{B'}  + ((1 - p_{C'}) p_{A'} p_{B'} + p_{C'} q_{A'} p_{B'}).
\]

For fixed $d_{A'}, \mu_{A'}, \mu_{B'}, \mu_{C'}$ (which decides $LP(v) = \mu_{A'}  + 1.5 \mu_{B'}  + 3(1+\eps)\mu_{C'}$), we seek the worst-case configurations over $(\mu_i)_{i \in [\ell]}$, $d_{B'}$ and $d_{C'}$ that maximize $cost'(v)$. 
Note that $p_{A'}$ (and $q_{A'}$), $p_{B'}$, $p_{C'}$ depend only on the $\mu_i$ values of $A', B', C'$ respectively, so we can treat them separately. We can deduce the following conclusions about the worst-case configurations. 
Note that for any $I' \subseteq [\ell]$, $p_{I'} = \text{Pr[no direct connection in $I'$]} = \prod_{i \in I'} (1 - (1-p)\mu_i)$.

\begin{itemize}
\item For each of $p_{A'}$, $q_{A'}$, and $p_{B'}$, the coefficient (treating other terms as constants) in $cost'(v)$ is nonnegative, so the worst case happens when they are maximized. 

\item Let us first compute an upper bound on $p_{A'} = \prod_{i \in A'} (1 - (1-p)\mu_i)$. Fix $d_{A'} = |A'|$ and $\mu_{A'} = \sum_{i \in A'} \mu_i$. By a simple application of the AM-GM inequality, it follows that $p_{A'}$ is maximized when $\mu_i = \nf{\mu_{A'}}{d_{A'}}$ for each $i \in A'$, and this value equals $(1 - \nf{(1-p)\mu_{A'}}{d_{A'}})^{d_{A'}}$. Defining $p_{B'}$, $d_{B'}$ similarly, we derive similar expressions for upper bounds on $p_{B'}$.
For $q_{A'}$, note that the event corresponding to $q_{A'}$ happens when there is no direct connection in $A'$ and no leader open in $A'$: the probability of this happening is exactly $\prod_{i \in A'} (1-p)(1-\mu_i)$. Again using the AM-GM inequality, we can upper bound this by $(1-p)^{d_{A'}}(1 - \nf{\mu_{A'}}{d_{A'}})^{d_{A'}}$.

\item For fixed $\mu_{B'}$, for the worst case, since $p_{B'} = (1 - \nf{(1-p)\mu_{B'}}{d_{B'}})^{d_{B'}}$ is increasing with $d_{B'}$, we can let $d_{B'} \rightarrow \infty$, which results in $e^{-(1-p) \mu_{B'}}$. (We cannot do this for $A'$ because of the factor $(1-p)^{d_{A'}}$ in $q_{A'}$.)

\item For $p_{C'}$, notice that the term involving $p_{C'}$ is $p_{C'}(q_{A'}p_{B'} - p_{A'}p_{B'})$, and $q_{A'} \leq p_{A'}$. Thus in order to maximize $cost'(v)$, it is beneficial to reduce $p_{C'}$ as much as possible (given $\mu_{C'}$), and hence we can set $d_{C'} = 1$, which makes $p_{C'}$ the smallest possible value $1 - (1-p)\mu_{C'}$. 
\end{itemize}

Accounting for these changes, we obtain 
\begin{align*}
t_{1.5} & \leq t'_{1.5} =  \bigg(1 - (1-p)\frac{\mu_{A'}}{d_{A'}}\bigg)^{d_{A'}} \\
t_{2} & \leq t'_{2} = \bigg(1 - (1-p)\frac{\mu_{A'}}{d_{A'}}\bigg)^{d_{A'}} e^{(p-1)\mu_{B'}} \\
t_{3} & \leq t'_{3} =  \bigg(1 - (1-p)\frac{\mu_{A'}}{d_{A'}}\bigg)^{d_{A'}} e^{(p-1)\mu_{B'}} \mu_{C'}(1-p) + \left ((1-p)(1 - \frac{\mu_{A'}}{d_{A'}})\right )^{d_{A'}} e^{(p-1)\mu_{B'}}(1 - \mu_{C'}(1-p)).
\end{align*}

Thus, we now only need to compute the min-max value $ \min_p \max_{d, \mu_{A'}, \mu_{B'}, \mu_{C'}}(\nf{cost'(v)}{LP(v)})$ where $cost'(v) = 1 + 0.5 t'_{1.5} + 0.5 t'_{2} + t'_{3}$, and the variables are $p, d_{A'}, \mu_{A'}, \mu_{B'}, \mu_{C'}$ with the constraint $\mu_{A'} + \mu_{B'} + \mu_{C'} = 1$, and $d_{A'}$ being a positive integer. 

\subsection{Final simplifications}
\label{subsec:final}

In the next lemma, we show that in fact we can assume without loss of generality that $\mu_{A'} = 1$ and $\mu_{B'} = \mu_{C'} = 0$, so that this is equivalent to minimizing  over $p$ and maximizing over $d$ the expression $g(p,d) = 1 + (1 - \nf{(1-p)}{d})^d + ((1-p)(1-1/d))^d$. This concludes the proof of \Cref{thm:algo}.

\begin{lemma}\label{lemma:optimize}
 $\min_{p}\max_{d_{A'},\mu_{A'},\mu_{B'},\mu_{C'}}  (\nf{cost'(v)}{LP(v)}) = \min_p \max_{d} g(p,d) \leq \APXs \approx \APX$, where $\APXs := g(p^*, d^*)$ with
$p^* := \nf{(10 - 6\sqrt{2}) }{7}\approx 0.22$ and $d^* := 3$.
\end{lemma}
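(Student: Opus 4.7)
The plan has three stages: first reduce the inner maximum over $(\mu_{A'}, \mu_{B'}, \mu_{C'}, d_{A'})$ to the univariate maximum over $d$ by showing the worst case is $\mu_{A'} = 1$, $\mu_{B'} = \mu_{C'} = 0$; then verify that the simplified expression is exactly $g(p, d)$; finally, carry out the $\min_p \max_d$ optimization explicitly.

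For the first stage, I would show that for each fixed $p$, mass on $B'$ or $C'$ can be transported to $A'$ without decreasing the ratio. Writing $cost'(v) = 1 + 0.5 P_A + 0.5 P_A P_B + P_B(P_A r + Q_A(1-r))$ with $P_A, Q_A, P_B, r$ as in the paper, and $LP(v) = \mu_{A'} + 1.5\mu_{B'} + 3(1+\eps)\mu_{C'}$, the key intuition is that distant mass inflates $LP(v)$ by factors $1.5$ or $3$ while providing only a modest reduction in the connection probabilities. Concretely, I would handle two reductions in turn: (i) move all of $\mu_{C'}$ to $\mu_{A'}$, verifying that the ratio is non-decreasing; (ii) then move all of $\mu_{B'}$ to $\mu_{A'}$, again verifying monotonicity. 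Each step reduces to a one-parameter inequality along the transport path, which I plan to verify by computing partial derivatives of the numerator and denominator and using the structural inequalities $Q_A \leq P_A$ and $P_B \leq 1$ to pin down the sign. The freedom to choose $d_{A'}$ is preserved throughout.

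After the reduction, $\mu_{A'} = 1$ gives $P_B = 1$, $r = 0$, $P_A = (1 - (1-p)/d)^d$, and $Q_A = ((1-p)(1-1/d))^d$, so $cost'(v)/LP(v) = 1 + P_A + Q_A = g(p, d)$ exactly. For the outer optimization, for $d = 3$ the function $g(p, 3) = 1 + ((2+p)/3)^3 + (2(1-p)/3)^3$ is smooth in $p$, and setting the derivative to zero yields the quadratic $(2+p)^2 = 8(1-p)^2$. The relevant root in $[0,1]$ is $p^* = 2(\sqrt{2} - 1)/(1 + 2\sqrt{2}) = (10 - 6\sqrt{2})/7$, and plugging back gives $g(p^*, 3) = \APXs \approx \APX$. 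To confirm $d^* = 3$ is the maximizer over $d$ at $p = p^*$, I would compute $g(p^*, d)$ for $d \in \{1, 2, 4, 5\}$ directly and bound the tail $d \to \infty$ by the pointwise limit $1 + e^{-(1-p^*)} \approx 1.458$, well below $\APXs$.

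The main obstacle is the exchange/transport argument in the first stage: the ratio is not obviously monotone or concave in the $\mu$-variables, so one must handle the two moves (from $C'$ and from $B'$) separately and perform careful differentiation, balancing the drop in $LP(v)$ against the drop in $cost'(v)$. Once that is in place, the remaining algebra is routine polynomial manipulation and finite case-checks.
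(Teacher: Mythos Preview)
Your high-level plan --- reduce the maximization over $(\mu_{A'},\mu_{B'},\mu_{C'},d_{A'})$ to $\mu_{A'}=1$, then optimize $g(p,d)$ --- matches the paper's. The final stage (solving $\min_p\max_d g(p,d)$ by finding the critical point at $d=3$, checking small $d$, and bounding the tail by $1+e^{-(1-p^*)}$) is essentially identical to the paper's Lemma on the minmax value.

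There are, however, two substantive differences in the reduction stage, and one of them is a potential gap.

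\textbf{The ``for each fixed $p$'' claim.} You propose to show that for \emph{every} $p$ the transport from $B'$ and $C'$ to $A'$ does not decrease the ratio. The paper does \emph{not} prove this and does not need to: to get the stated equality it suffices to show (i) $\max_d g(p,d)$ is always a lower bound on the inner max (trivial, since $g$ is a special configuration), and (ii) at the single point $p=p^*$ the inner max is at most $\alpha_{alg}$. The paper's derivative argument showing $\partial\psi/\partial\mu>0$ is carried out \emph{only} at $p=p^*$ and \emph{only} for $d_{A'}\ge 3$; the cases $d_{A'}\in\{1,2\}$ are handled by direct computation of the maximum over $\mu$ at $p=p^*$. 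Your stronger monotonicity claim for all $p$ may well be false, and in any case your sketch (``use $Q_A\le P_A$ and $P_B\le 1$ to pin down the sign'') does not obviously give it. You should restrict the transport argument to $p=p^*$ and be prepared to treat small $d_{A'}$ separately.

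\textbf{The Jensen linearization.} The paper does not work directly with the product form $P_A=(1-(1-p)\mu_{A'}/d_{A'})^{d_{A'}}$ etc. Instead it first applies Jensen (convexity of $x\mapsto(1-ax)^b$ and $x\mapsto e^{ax}$ on $[0,1]$) to replace each such factor by its chord, obtaining a function $\psi$ that is \emph{polynomial} in $\mu_{A'},\mu_{B'},\mu_{C'}$ and still an upper bound on the ratio. This makes the subsequent steps tractable: showing $\mu_{C'}=0$ is reduced to checking convexity of a quadratic in $\mu_{C'}$, and showing $\mu_{A'}=1$ becomes a sign check on a quadratic in $\mu_{A'}$. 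Your plan to differentiate the raw exponential/power expressions along a transport path is in principle possible but considerably messier, and you have not indicated how the $d_{A'}$-dependence interacts with moving mass into $A'$ (do you hold $d_{A'}$ fixed? re-optimize it?). The paper's linearization neatly decouples these issues because after Jensen the only place $d_{A'}$ appears is inside the constants $(1-\tfrac{1-p}{d_{A'}})^{d_{A'}}$ and $(1-\tfrac{1}{d_{A'}})^{d_{A'}}$, which are then optimized at the very end.

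In short: your outline is on the right track, but the execution of the reduction step as written is both stronger than required and missing the key simplification (Jensen) that makes the calculus manageable; absent that, the promised ``routine'' sign computations are likely to be neither routine nor uniformly true in $p$.
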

\begin{proof}
For the sake of convenience and better readability, we denote $A',B',C'$ by $A,B,C$. We begin by upper bounding each term. First, observe that the function $\psi(x) = (1 - ax)^{b}$ is convex when $1 - ax > 0$ and $b \geq 1$. This can be checked easily by computing the second derivative $\psi''(x) = a^2b(b-1)(1-ax)^{b-2} > 0$. Using Jensen's inequality on the interval $[0,1]$, this means that $(1-ax)^b \leq x((1-a)^b - 1) + 1$ whenever $x \in [0,1]$. This means one can upper bound $(1 - (1-p)\frac{\mu_A}{d_A})^{d_A}$ by $\mu_A((1-\frac{1-p}{d_A})^{d_A} - 1) + 1$. Similarly, the function $e^{ax}$ is convex for any $a \in \mathbb{R}$, and by Jensen's inequality we have $e^{ax} \leq (e^a - 1)x + 1$. This gives $e^{(p-1)\mu_B} \leq (e^{p-1} - 1)\mu_B + 1$. Finally, we note similarly that $(1 - \frac{\mu_A}{d_A})^{d_A} \leq \mu_A((1 - \frac{1}{d_A})^{d_A} - 1) + 1$. Plugging these into the expression for $cost'(v)$, we obtain the following upper bound for $\nf{cost'(v)}{LP(v)}$.

\begin{align*}
\frac{cost'(v)}{LP(v)}  \leq \psi(p,d_A,\mu_A, \mu_B, \mu_C) &= \frac{1}{\mu_A + 1.5\mu_B + 3\mu_C} \Bigl[1 +  0.5\bigl(1-\mu_A+\mu_A \left(1-\frac{1-p}{d_A}\right)^{d_A}\bigr) \\ &+ 0.5\bigl(1-\mu_B+e^{p-1} \mu_B\bigr) \bigl(1-\mu_A+\mu_A \left(1-\frac{1-p}{d_A}\right)^{d_A}\bigr) \\
  &+\mu_C(1-p)\bigl(1-\mu_B+e^{p-1} \mu_B\bigr) \bigl(1-\mu_A+\mu_A \left(1-\frac{1-p}{d_A}\right)^{d_A}\bigr) \\&+ (1 - \mu_C(1-p))\bigl(1-\mu_B+e^{p-1} \mu_B\bigr) \bigl(\left(1-p\right)^{d_A}-\mu_A \left(1-p\right)^{d_A}\\&+\left(1-\frac{1}{d_A}\right)^{d_A} \mu_A \left(1-p\right)^{d_A}\bigr)\Bigr] \\
  &=: \frac{\zeta(p,d_A,\mu_A,\mu_B,\mu_C)}{\mu_A + 1.5\mu_B + 3\mu_C}
\end{align*}

Observe that $g(p,d) = 1 + (1 - \frac{1-p}{d})^d + ((1-p)(1-\frac{1}{d}))^d$ is obtained by setting $d_A = d$, $\mu_A = 1$, $\mu_B = \mu_C = 0$ in $\psi(p,d_A,\mu_A,\mu_B,\mu_C)$. 
    % \enote{Somehow it is recommended to avoid starting a sentence with a math expression.}
Then $g(p,d)$ is a function of two variables: we prove that the min-max value $\min_p \max_d g(p,d) = \APXs$ is 
where $\APXs := g(p^*, d^*) \approx \APX$.

\begin{lemma}\label{lemma:minmax}
$\min_p \max_d g(p,d) = \APXs$, and the min-max value is attained when $p = p^* =  \frac{1}{7}(10 - 6\sqrt{2})$ and $d = d^* = 3$.
\end{lemma}

\begin{proof}
Let $d^* = 3$, $p^* =  \frac{1}{7}(10 - 6\sqrt{2})$, and let 
$\alpha := g(p^*, d^*) \approx \APX$.
When $d = 3$, $g(p,d)$ is a cubic function of $p$ which is easily shown to have a minimum value at $p = p^*$, showing that $\max_{d} \min_p g(p, d) \geq \alpha$. 

Similarly, $g(p^*,d)$ is a function of $d$ alone. Using the inequality $1-x \leq e^{-x}$, observe that $1 + e^{p^*-1} + \frac{1}{e}(1-p^*)^d$ is  an upper bound on $g(p^*,d)$. When $d \geq 7$, this quantity is at most $1.542$. Manually checking values of $g(p^*, d)$ for $d \leq 7$, the maximum is attained when $d = 3$, which implies that $\min_p \max_d g(p, d) \leq \alpha$. Since the min-max value is at least the max-min in general and we proved 
$\max_{d} \min_p g(p, d) \geq \alpha \geq \min_p \max_d g(p, d)$, both values are exactly $\alpha$. 
\end{proof}

Next we will show that for $p = p^* =  \frac{1}{7}(10 - 6\sqrt{2}) $ and any choice of $d_A, \mu_A, \mu_B, \mu_C$ with $\mu_A + \mu_B + \mu_C = 1$ , we have $\psi(p,d_A, \mu_A, \mu_B, \mu_C) \leq \APXs$. Henceforth we fix $p = p^*$ and try to maximize $\psi(p,d_A,\mu_A,\mu_B,\mu_C)$.

First, we show that there is a maximizer of $\psi(.)$ with $\mu_C = 0$. We need the following lemma for the proof.

\begin{lemma}
The functions $\zeta(p,d_A, \mu_A, 1 - \mu_A - \mu_C, \mu_C)$ and $\zeta(p,d_A, 1 - \mu_B - \mu_C, \mu_B, \mu_C)$ are both convex with respect to $\mu_C$.
\end{lemma}

\begin{proof}
We show convexity by computing second derivatives. Consider $\frac{\partial^2 \zeta(p,d_A, \mu_A, 1 - \mu_A - \mu_C, \mu_C)}{\partial \mu_C^2}$. Since $\zeta(p,d_A, \mu_A, 1 - \mu_A - \mu_C, \mu_C)$ is a quadratic function of $\mu_C$, the second derivative is simply twice the coefficient of $\mu_C^2$. Simple computation shows that this is equal to $$2\left(1-e^{p-1}\right)(1-p)\left(1 - (1-p)^{d_A} + \mu_A\left(\left(1-\frac{1-p}{d_A}\right)^{d_A}- 1 - \left(1-\frac{1}{d_A}\right)^{d_A}(1-p)^{d_A} +  (1-p)^{d_A}\right)\right)$$. Let $a = 1 - p$ and $b = \frac{1}{d_A}$. Then it is sufficient to show that $ \rho(a,b,d_A) = 1 - a^{d_A} + \mu_A\left((1 - ab)^{d_A} - 1 - a^{d_A}(1-b)^{d_A} + a^{d_A}\right)$ is positive. First note that $(1-ab)^{d_A} \geq (1-b)^{d_A}$. Plugging this and simplyfing yields the expression $$1 - a^{d_A} - \mu_A(1 - (1-b)^{d_A})(1 -a^{d_A}) = (1 - a^{d_A})(1 - \mu_A(1 - (1-b)^{d_A}) > 0$$ for our choices and suitable ranges of of $a,b, d_A$. 

Similarly, consider the partial derivative  $\frac{\partial^2 \zeta(p,d_A, 1 - \mu_B - \mu_C, \mu_B , \mu_C)}{\partial \mu_C^2}$. Again a simple computation yields that this is equal to
$$2(1 - (1-e^{p-1})\mu_B)(1-p)\left(1 - (1-\frac{1-p}{d_A})^{d_A}  - (1-p)^{d_A} + (1-\frac{1}{d_A})^{d_A}(1-p)^{d_A}\right)$$.

The last term is  equal to $ \rho(a,b,d_A) = 1 - (1 - ab)^{d_A} - a^{d_A} + a^{d_A}(1 - b)^{d_A}$

Observe that $\rho(0,b,d_A) = 0$, and $\rho(1,b,d_A) = 0$. We now show $\rho$ is concave with respect to $a$ whenever $a \in (0,1)$. Consider $\frac{\partial^2 \rho(a,b,d_A)}{\partial a^2}$. This is equal to $-b^2d_A(d_A - 1) (1 - ab)^{d_A - 2} -  d_A(d_A - 1)a^{d_A - 2} (1 - (1-b)^{d_A})$. Both of these terms are negative when $d_A \geq 1$, and hence $\rho$ is indeed concave with respect to $a$. It follows that $\rho(a,b,d_A) > 0$ for any $a \in (0,1)$, and hence $\zeta(p,d_A,1 - \mu_B - \mu_C, \mu_B, \mu_C)$ is convex with respect to $\mu_C$.
\end{proof}

\begin{observation}
The functions $\zeta(p,d_A, \mu_A, 1 - \mu_A - \mu_C, \mu_C) - \lambda(\mu_A + 1.5(1 - \mu_A - \mu_C) + 3\mu_C)$ and $\zeta(p,d_A, 1 - \mu_B - \mu_C, \mu_B, \mu_C) - \lambda(1 - \mu_B - \mu_C + 1.5 \mu_B + 3\mu_C)$ are convex with respect to $\mu_C$ for any fixed $\lambda \in \mathbb{R}$.

\end{observation}

\begin{proof}
Follows from the fact that adding a linear function to a convex function results in a convex function.
\end{proof}

\begin{lemma}
There is a maximizer for $\psi(p,d_A, \mu_A,\mu_B, \mu_C)$ where $\mu_C = 0$.
\end{lemma}

\begin{proof}
Let $\lambda^*$ be the maximum value of $\psi(p,d_A, \mu_A,\mu_B, \mu_C)$. Then we must have $\zeta(p,d_A, \mu_A, 1 - \mu_A - \mu_C, \mu_C) - \lambda^*(\mu_A + 1.5(1 - \mu_A - \mu_C) + 3\mu_C) \geq 0$ and $\zeta(p,d_A, 1 - \mu_B - \mu_C, \mu_B, \mu_C) - \lambda^*(1 - \mu_B - \mu_C + 1.5 \mu_B + 3\mu_C) \geq 0$ for some choice of $d_A, \mu_A, \mu_B, \mu_C$. Since  $\zeta(p,d_A, \mu_A, 1 - \mu_A - \mu_C, \mu_C) - \lambda^*(\mu_A + 1.5(1-\mu_A-\mu_C) + 3\mu_C)$ is convex with respect to $\mu_C$, it follows that there is another point $d_A, \mu_A', \mu_B',\mu_C'$ which attains a higher value of  $\zeta(p,d_A, \mu_A', \mu_B', \mu_C') - \lambda^*(\mu_A' + 1.5 \mu_B' + 3\mu_C' )$ and satisfies $\mu_C' = 0$ or $\mu_C' = 1 - \mu_A'$. Using the fact that $\zeta(p,d_A, 1 - \mu_B - \mu_C, \mu_B, \mu_C) - \lambda^*(1 - \mu_B - \mu_C + 1.5 \mu_B + 3\mu_C)$ is convex with respect to $\mu_C$, we can further obtain a triplet $(\mu_A'',\mu_B'',\mu_C'')$ that still satisfies both inequalities and further $\mu_C'' = 0$ or both $\mu_C'' = 1 - \mu_B''$ and $\mu_C'' = 1 - \mu_A''$. These two relationships imply that either $\mu_C'' = 0$, or both $\mu_A'' = \mu_B'' = 0$, so that $\mu_C'' = 1$. However, the latter is impossible, since $\zeta(p,d_A, 0,0,1) = \frac{1}{3}(3 - p + p(1-p)^d) < 1$ which is clearly less than $\lambda^*$. Hence $\mu_C'' = 0$.
\end{proof}

Henceforth, we concern ourself with $\psi(p,d_A, \mu_A, 1 - \mu_A,0)$. For convenience we will use the notation $\psi(p,d,\mu)$ instead, where $d:= d_A$ and $\mu:= \mu_A$.

\begin{lemma}
When $d \geq 3$ and $p = p^*$, $\frac{\partial\psi(p,d,\mu)}{\partial \mu} > 0$ for all $\mu \in (0,1)$.

\end{lemma}

\begin{proof}

Again, define $a = (1-p)$ and $b = \frac{1}{d}$ and additionally define $c = e^{p-1}$. First, we evaluate $\psi(p,d,\mu)$ by plugging $\mu_C = 0$, $\mu_B = 1- \mu$, $\mu_A = \mu$, $d_A = d$ in $\psi(p,d_A,\mu_A,\mu_B,\mu_C)$. We obtain
\begin{align*}
\psi(p,d,\mu) &= \frac{1}{1.5 - 0.5\mu}(1.5 + 0.5c + ca^d + \mu((1-ab)^d(0.5 + 0.5c) - c + ca^d(1-b)^d + a^d(1-2c)) + \\
&\frac{\mu^2}{2} (1-c) ((1-ab)^d - 1 + 2a^d((1-b)^d -  1)))
\end{align*}
Now we take the partial derivative $\frac{\partial \psi(p,d,\mu)}{\partial \mu}$. First, let us write $\psi(p,d,\mu) := \frac{\psi_1(p,d,\mu)}{1.5 - 0.5\mu}$. We have
$$\frac{\partial \psi_1(p,d,\mu)}{\partial \mu} = (1-ab)^d(0.5 + 0.5c) - c + ca^d(1-b)^d + a^d(1-2c) + \mu (1-c) ((1-ab)^d - 1 + 2a^d((1-b)^d -  1))$$

Notice that $\frac{\partial \psi(p,d,\mu)}{\partial \mu}$ = $\frac{1}{(1.5 - 0.5\mu)^2}((1.5 - 0.5 \mu) \frac{\partial \psi_1(p,d,\mu)}{\partial \mu} + 0.5 \psi_1(p,d,\mu)) := \frac{\psi_2(p,d,\mu)}{(1.5 - 0.5 \mu)^2} $.

Here $\psi_2$ is a quadratic function of $\mu$ of the form $\lambda_1 \mu^2 + \lambda_2\mu + \lambda_3$, where
\begin{align*}
\lambda_1 &= -\frac{1}{4}(1-c) ((1-ab)^d - 1 + 2a^d((1-b)^d -  1))\\
\lambda_2 &= 1.5(1-c) ((1-ab)^d - 1 + 2a^d((1-b)^d -  1))\\
\lambda_3 &= 1.5 ((1-ab)^d(0.5 + 0.5c) - c + ca^d(1-b)^d + a^d(1-2c)) + 0.5(1.5 + 0.5c + ca^d)
\end{align*}

Observe that $\lambda_1 > 0$ and $\frac{\lambda_2}{-2\lambda_1} = 3$. This means that $\psi_2$ attains it minimum at $\mu = 3$ and is decreasing with respect to $\mu$ when $\mu \in (0,1)$. We compute $\psi_2(p,d,1)$.

\begin{align*}
\psi_2(p,d,1) = \lambda_1 + \lambda_2 + \lambda_3
&= (2 - 0.5c)(1-ab)^d + (2.5 - c)a^d(1-b)^d-a^d - \frac{1}{2}
\end{align*}

Recall that $b = \frac{1}{d}$, and $0 < a,b,c < 1$.  The function $(1-k/x)^x$ is increasing when $x \geq 1$, for any $k > 0$. Since $d \geq 3$, we obtain the following lower bound on $\psi_2(p,d,1)$.

$$\psi_2(p,d,1) \geq \psi_2'(p,d,1) = (2 - 0.5c)(1-a/3)^3 + (2.5-c)a^d(1-1/3)^3 - a^d - \frac{1}{2}$$

Plugging in $p = p^*$ in $\psi_2'(p,d,1)$, this expression evaluates approximately to $0.214 - 0.395(1 - p^*)^d$. Since $d \geq 3$, this quantity is at least $\psi_2'(p^*,3,1) \approx 0.025 > 0$. Since $\psi_2(p^*,d,\mu)$ decreases with $\mu$ and $\psi_2(p^*,d,1) > 0$ whenever $d \geq 3$, it follows that $\frac{\partial \psi(p^*,d,\mu)}{\partial \mu} > 0$ for every $\mu \in (0,1)$ and $d \geq 3$. 
\end{proof}

\begin{lemma}
When $d \geq 3$ and $p = p^*$, $\psi(p,d,\mu) \leq \APXs$ for all $\mu \in (0,1)$.
\end{lemma}

\begin{proof}
The proof is straightforward. Since we showed in the previous lemma that $\frac{\partial \psi(p,d,\mu)}{\partial \mu} > 0$, for each $\mu \in (0,1)$ when $p = p^*$ and $d \geq 3$, $\psi(p^*,d,\mu)$ is maximized when $\mu = 1$. When $\mu = 1$ and $p = p^*$, we have $\psi(p,d,1) = g(p^*,d) \leq \APXs$ from~\Cref{lemma:minmax}.
\end{proof}

\begin{lemma}
When $d = 1,2$ and $p = p^*$, $\psi(p,d,\mu) \leq \APXs$ for all $\mu \in (0,1)$.
\end{lemma}

\begin{proof}
The proof is by direct computation. When $d = 1$, and $p = p^*$, we obtain the function $\psi(p^*,1,\mu) \leq \frac{(4.173 - 0.462\mu - 1.277\mu^2)}{3 - \mu}$. Finding the maximum of the right hand side is a single variable maximization problem easily solved by computing first derivatives. We obtain that the maximum is attained when $\mu \approx 0.39$, and the maximum is at most $1.46$.

When $d = 2$ and $p = p^*$, similarly we obtain the function $\frac{4.02 - 0.128\mu - 0.842\mu^2}{3 - \mu}$ as an upper bound, which is maximized when $\mu \approx 0.836$ and the maximum is at most $1.537 < \APXs \approx \APX$. Thus the inequality holds.
\end{proof}

Finally, combining all these results, we have shown that the function $\psi(p,d_A, \mu_A, \mu_B, \mu_C)$ is at most $\APXs \approx \APX$ when $p = p^*$, for any (valid) choice of $d_A, \mu_A, \mu_B, \mu_C$. This completes the proof of~\Cref{lemma:optimize}.
\end{proof}
\section{Hardness}
In this section, we prove the following hardness result for \KM. 

\hardnessmain*

Our result follows from the following hardness result for $k$-Hypergraph Vertex Cover ($k$-HVC), where we allow parallel hyperedges.

\begin{theorem}
Assuming the Unique Games Conjecture, for any $\eps > 0$ and $d \in \Z_{\geq 3}$, given a $d$-uniform hypergraph $H = (V, E)$, it is NP-hard to distinguish the following two cases.
\begin{itemize}
    \item YES: There exists $U \subseteq V$ with $|U| = \frac{|V|}{d - 1}$ that intersects at least a $(1 - \eps)$ fraction of hyperedges. 
    \item NO: For any $\alpha \in [0, 1]$, any subset $U \subseteq V$ with $\alpha |V|$ vertices intersects at most a $1 - (1 - \alpha)^d + \eps$ fraction of hyperedges. 
\end{itemize}
\label{thm:hvc}
\end{theorem}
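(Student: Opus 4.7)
The plan is to adapt the standard Unique Games based hardness reduction for $d$-uniform hypergraph vertex cover (\`a la Bansal--Khot and Austrin--Khot--Safra) and extract a \emph{fine-grained} coverage bound that holds simultaneously for every $\alpha \in [0,1]$, not just at $\alpha \approx 1$. Start from a UG instance $\Gamma$ on vertex set $W$ with alphabet $[L]$ that is UGC-hard to distinguish $(1-\delta)$ vs.\ $\delta$ satisfiable. Build $H$ on the vertex set $V = W \times \{0,1\}^L$ equipped with the biased product measure $\mu_p^{\otimes L}$, $p := 1/(d-1)$, so that any $U \subseteq V$ is identified with a family of functions $f_w : \{0,1\}^L \to \{0,1\}$ whose means $\alpha_w := \E_{\mu_p}[f_w]$ average to $\alpha = |U|/|V|$. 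The hyperedges are produced by a $d$-query dictator test parametrised by a distribution $\nu$ on $\{0,1\}^d$ with (i) all single-coordinate marginals equal to $p$, (ii) pairwise independence, and (iii) no mass at $0^d$; for $d = 3$ take the uniform distribution on $\{(1,0,0),(0,1,0),(0,0,1),(1,1,1)\}$, and analogous small-bias/pairwise-independent constructions extend to every $d \geq 3$. To sample a hyperedge, pick a UG query $(w^*, w_1,\ldots,w_d)$ with associated permutations $\pi_i$ and coupled strings $x^{(1)},\ldots,x^{(d)} \in \{0,1\}^L$ whose per-coordinate joint marginal is $\nu$, and add $\{(w_i, \pi_i^{-1}(x^{(i)}))\}_{i=1}^d$ (parallel hyperedges are allowed, so no edge weights are needed).

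\textbf{Completeness.} A labeling $\sigma$ satisfying $(1-\delta)$ of $\Gamma$ yields dictator functions $f_w(x) = x_{\sigma(w)}$; the resulting $U$ has $|U|/|V| = p = 1/(d-1)$. On hyperedges whose UG constraints are all consistent with $\sigma$, the dictated coordinates across the $d$ queries align and the edge is uncovered iff $\nu$ outputs $0^d$, which has probability $0$ by (iii); the remaining $O(\delta)$ fraction of edges is absorbed into the $\eps$ slack.

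\textbf{Soundness (main step).} Suppose $U$ of measure $\alpha$ covers more than $1 - (1-\alpha)^d + \eps$ of the hyperedges, equivalently $\E[\prod_i (1 - f_{w_i}(x^{(i)}))] < (1-\alpha)^d - \eps$. Apply the standard influence dichotomy: if some $f_w$ has a coordinate of low-degree influence exceeding a threshold $\tau = \tau(\eps, d)$, decode a UG labeling satisfying more than a $\delta$ fraction of $\Gamma$, contradicting the NO side of UG. Otherwise every $f_w$ has uniformly small low-degree influences, so Mossel's invariance principle -- applied with the pairwise-independent $\nu$, whose covariance structure matches independent Gaussians -- shows that the expectation is within $\eps/2$ of its Gaussian analog, which factorises as $\prod_i (1 - \alpha_{w_i})$. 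Taking expectation over the UG query (designed so that $w_1,\ldots,w_d$ are marginally uniform and approximately independent) and applying Jensen/convexity to the map $(\alpha_1,\ldots,\alpha_d) \mapsto \prod_i (1-\alpha_i)$ along the slice of given average $\alpha$ yields the lower bound $(1-\alpha)^d$, the required contradiction.

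\textbf{Main obstacle.} The hardest part is making the soundness analysis \emph{uniform} in $\alpha$, so that a single reduced instance witnesses hardness simultaneously at every coverage level. Concretely the invariance error $\eps/2$ must be independent of the individual means $\alpha_w$; this holds because the functional $\prod_i (1 - f_{w_i})$ is multilinear in the $f_{w_i}$'s, so Mossel's bound depends only on $d$ and the influence threshold $\tau$, not on the means. A secondary technical task is verifying the existence of a pairwise-independent $\nu$ on $\{0,1\}^d$ with marginal $1/(d-1)$ and support disjoint from $\{0^d\}$ for every $d \geq 3$; this follows from standard small-bias space constructions.
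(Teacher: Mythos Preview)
Your overall strategy is exactly the paper's: a pairwise-independent long-code test whose support avoids the all-miss atom, composed with Unique Games, analysed via Mossel's invariance principle so that the uncovered mass factorises, followed by a convexity step. Two technical points in your sketch need correcting, and both are handled differently (and more cleanly) in the paper.

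\emph{(i) The domain.} Over the biased cube $(\{0,1\}^L,\mu_p)$ a dictator has cardinality $2^{L-1}=|V|/2$, so $|U|/|V|\ne p$; more generally the identity ``$\E_{\mu_p}[f_w]$ averages to $|U|/|V|$'' is false whenever the vertex measure is non-uniform, and the theorem is stated for unweighted hypergraphs. The paper sidesteps this by taking the domain $[d-1]^R$ with the \emph{uniform} measure, so the dictator $\{x:x_i=1\}$ has exactly a $1/(d-1)$ fraction of vertices under counting measure and $|U|=|V|/(d-1)$ holds verbatim. Your distribution $\nu$ lifts directly to a pairwise-independent $\mu$ on $[d-1]^d$ whose support always hits the symbol $1$ in some coordinate; this is the lemma the paper imports from Ghazi et al.

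\emph{(ii) The convexity step and the order of averaging.} As written, your Jensen claim is false: on the slice $\sum_i\alpha_i=d\alpha$ the product $\prod_i(1-\alpha_i)$ can be much \emph{smaller} than $(1-\alpha)^d$ (take $\alpha_1=d\alpha$ and the rest $0$). The paper's fix is structural. Since $w_1,\dots,w_d$ are i.i.d.\ neighbours of a common $w^*$, one first averages $f_{w_i}\circ\pi_i^{-1}$ over the random neighbour to obtain a single $[0,1]$-valued function $F_{w^*}$, applies invariance to $F_{w^*}$ alone to get the lower bound $(\E[F_{w^*}])^d$ on the uncovered mass, and only then averages over $w^*$ using convexity of $x\mapsto x^d$ to reach $(1-\alpha)^d$. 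This ordering also makes the influence dichotomy clean: one tests the averaged function $F_{w^*}$ for an influential coordinate and decodes from there, rather than testing each $f_{w_i}$ separately as your sketch suggests.
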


A standard reduction shows that \Cref{thm:hardness} immediately follows from \Cref{thm:hvc}. 

\begin{proof} [Proof of \Cref{thm:hardness}] 
Given a $d$-uniform hypergraph $H = (V, E)$ from \Cref{thm:hvc}, we first perform the following operation. 
\begin{itemize}
    \item For each hyperedge in $E$, add $M$ copies of the same hyperedge for $M = (|V||E|)^5$. 
    
    \item For each triple $\{ u, v, w \} \subseteq V$ that do not have a corresponding hyperedge, add a hyperedge. 
\end{itemize}
This ensures that (1) $|E| \geq \omega(|V|)$ and (2) any two vertices are contained in the same hyperedge, while maintaining the claimed hardness up to an additive loss of $o(1)$. After such a transformation, our instance for $\KM$ is simply the vertex-hyperedge incidence graph of $H$ with $k = \frac{|V|}{d-1}$. 

In the YES case, by placing $k$ centers at the promised set $U$, at least a $(1 - \eps - o(1))$ fraction of vertices from $E$ have distance $1$ to $U$, and all the other vertices have distance at most $4$ from $U$. Therefore, the average distance to the open centers is $1 + O(\eps)$.

In the NO case, consider any solution $U$ that opens $\frac{1 - p}{d - 1} |V|$ centers in $V$ and 
$\frac{p}{d - 1} |V|$ in $E$ for some $p \in [0, 1]$. Letting $\alpha = \frac{1 - p}{d - 1}$ ensures that at least a $(1 - \frac{1 - p}{d - 1})^d - \eps - o(1)$ fraction of hyperedges have a distance strictly greater than one from $U$. Letting $\alpha = \min(1, \frac{1 - p}{d - 1} + d \frac{p}{d - 1})$ ensures that at least a $(1 - \min(1, \frac{1 - p}{d - 1} + d \frac{p}{d - 1}))^d - \eps - o(1)$ fraction of hyperedges have a distance strictly greater than two from $U$. Since all distances are integers, the average distance to $U$ is at least 
\[h(p,d) = 
1 + \bigg(1 - \frac{1 - p}{d - 1}\bigg)^d + \bigg(1 - \min(1, \frac{1 - p + pd}{d - 1})\bigg)^d - O(\eps).\]

We thus obtain a hardness of $\max_{d \in \mathbb{Z}_{\geq 3}} \min_{p \in [0,1]} h(p,d) \approx \numhard - \OO(\epsilon)$ as desired.

% \enote{Proof of 1.4 here.}
\end{proof}

\subsection{Proof of \Cref{thm:hvc}}
It remains to prove \Cref{thm:hvc}. It follows from standard techniques to prove hardness of approximation of constraint satisfaction or covering problems assuming the Unique Games Conjecture, pioneered by O'donnell et al.~\cite{khot2007optimal}. 
Our proof becomes particularly simple due to the use of {\em pairwise independence}~\cite{austrin2009approximation, Mossel10}. This notion has been crucially used for constraint satisfaction problems, but its use for covering problems (especially with $d > 3$) has been limited. 

\subsubsection{Dictatorship test}
Let $d \geq 3$ be an integer and $R$ be another positive integer. 
Our dictatorship test is a $d$-uniform hypergraph whose set of vertices is $[d - 1]^R$.
Let $\Omega = [d - 1]$. 
A distribution $\mu$ on $\Omega^d$ is called {\em pairwise independent} if $x$ is a random $d$-dimensional vector sampled from $\Omega$, 
for every $i < j \in [d]$ and $t_i, t_j \in [d-1]$, $\Pr[x_i = t_i, x_j = t_j] = \frac{1}{(d-1)^2}$. 

\begin{lemma} [\cite{ghazi2017lp}] For any $d \geq 3$, there exists a pairwise independent distribution $\mu$ on $[d-1]^d$ such that every $x \in [d - 1]^d$ in the support has at least one $i \in [d]$ with $x_i = 1$. 
\end{lemma}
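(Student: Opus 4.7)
The plan is to construct $\mu$ explicitly using a Reed--Solomon-type construction over a finite field. Assume first that $d-1$ is a prime power, and let $\mathbb{F}_q$ denote the field of size $q = d-1$, whose elements I enumerate as $\alpha_1, \ldots, \alpha_{d-1}$. I will identify the alphabet $[d-1]$ with $\mathbb{F}_q$ via the bijection $1 \leftrightarrow 0$, so that the required condition ``some coordinate equals $1$'' becomes ``some coordinate equals $0$ in $\mathbb{F}_q$.''

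Sample $(x,y) \in \mathbb{F}_q^2$ uniformly, and define
\[
Y_i := x + \alpha_i\, y \quad \text{for } i \in [d-1], \qquad Y_d := y.
\]
Let $\mu$ be the law of $(Y_1, \ldots, Y_d)$. For pairwise independence, when $1 \le i < j \le d-1$, the linear map $(x,y) \mapsto (Y_i, Y_j)$ has matrix with determinant $\alpha_j - \alpha_i \ne 0$, hence is a bijection on $\mathbb{F}_q^2$, giving $(Y_i,Y_j)$ uniform on $\mathbb{F}_q^2$. For $i \le d-1$ and $j=d$, the map $(x,y) \mapsto (Y_i, y)$ has determinant $1$, so $(Y_i, Y_d)$ is also uniform. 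This verifies the pairwise-independent property since $\Pr[Y_i = s, Y_j = t] = 1/q^2 = 1/(d-1)^2$ for every pair $(s,t)$.

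For the support condition, fix any realization $(x, y)$. If $y = 0$, then $Y_d = 0$ and we are done. If $y \ne 0$, then as $i$ ranges over $[d-1]$, the values $x + \alpha_i y$ are pairwise distinct (because $\alpha_i$ are) and hence cover all $q = d-1$ elements of $\mathbb{F}_q$; in particular some $Y_i$ equals $0$. Translating back through the identification, in every realization at least one coordinate of $(Y_1, \ldots, Y_d)$ equals $1$, as required.

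The main obstacle is that the clean Reed--Solomon construction only works when $d - 1$ is a prime power, so $\mathbb{F}_{d-1}$ exists. For general $d$, I would either reduce to a compatible $d$ by padding (embedding a pairwise-independent distribution on a smaller alphabet into $[d-1]^d$ while maintaining a zero coordinate), or invoke the more refined combinatorial construction from \cite{ghazi2017lp}, which works for all $d \ge 3$. Since the paper only needs the lemma as a black box, citing \cite{ghazi2017lp} for the general case after presenting the prime-power construction as the prototypical example suffices.
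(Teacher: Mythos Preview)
The paper does not prove this lemma at all; it simply imports it from \cite{ghazi2017lp} as a black box, so there is no ``paper's own proof'' to compare against. Your Reed--Solomon construction is correct when $d-1$ is a prime power: the linear maps you write down are indeed bijections on $\mathbb{F}_q^2$, so every pair $(Y_i,Y_j)$ is uniform, and the pigeonhole argument forces a zero coordinate in every realization. This is strictly more informative than what the paper provides and is exactly the kind of concrete instance that makes the lemma believable.

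Your one loose end is the non-prime-power case (the first failure is $d=7$). The ``padding'' idea you float is not as innocuous as it sounds: embedding a pairwise-independent distribution on $[q]^d$ with $q<d-1$ into $[d-1]^d$ will generally destroy uniformity of the marginals on the larger alphabet, and hence pairwise independence in the sense required here. So the honest move is exactly what you do at the end: present the field construction as the prototypical case and cite \cite{ghazi2017lp} for the full statement. That matches the paper's use of the lemma and is entirely adequate.
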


Fix the $\mu$ given by the above lemma, and for another parameter $\delta > 0$ to be determined, let $\mue$ be the $\delta$-noised version of $\mu$ on $\Omega^d$; after sampling $v$ from $\mu$, each coordinate of $x$ is independently resampled with probability $\delta$ to a random value in $\Omega$. 

Let $\mue^{\otimes R}$ be the product distribution on $(\Omega^d)^R$ so that for every $x^1, \dots, x^d \in \Omega^{R}$, $\mue^{\otimes R} (x^1, \dots, x^d) = \prod_{i=1}^R \mu_{\delta}(x^1_i, \dots, x^d_i)$. Then the dictatorship test is a weighted $d$-uniform hypergraph where the weight of a $d$-edge is exactly the probability that it is sampled from $\mue^{\otimes R}$. It can be easily converted to an unweighted hypergraph by duplicating the same hyperedge according to its weight. (Eventually, in the Unique Games reduction, $\eps$ and $R$ will be constants independent of the instance size.)

A \emph{dictator} is to the set $\{ x \in \Omega^R : x_i = 1 \}$ for some $i \in [R]$. Note that every such set intersects at least a $(1 - \delta)$ fraction of hyperedges.

\subsubsection{Fourier analysis preliminaries}
To analyze the soundness of the test, we use the following standard tools from Gaussian bounds for correlated functions from Mossel~\cite{Mossel10}.
We define the correlation between two and more correlated spaces.
\begin{definition}
Given a distribution $\nu$ on $\Omega_1 \times \Omega_2$, the correlation $\rho(\Omega_1, \Omega_2; \nu)$ is defined as
\[
\rho(\Omega_1, \Omega_2; \nu) = \sup \left\{ \mathsf{Cov}[f, g] : f : \Omega_1 \rightarrow \mathbb{R}, g : \Omega_2 \rightarrow \mathbb{R}, \mathsf{Var}[f] = \mathsf{Var}[g] = 1 \right\}.
\]
Given a distribution $\mu$ on $\Omega_1 \times ... \times \Omega_d$, the correlation $\rho(\Omega_1, ... , \Omega_d; \mu)$ is defined as
\[
\rho(\Omega_1, ... \Omega_d; \mu) = 
\max_{1 \leq i \leq d} \rho \bigg( \prod_{j \neq i} \Omega_j , \Omega_i; \mu \bigg).
\]
\end{definition}
Since our distribution $\mue$ on $\Omega_1 \times ... \times \Omega_d$ with $\Omega_1 = ... = \Omega_d = [d-1]$ is independently noised with probabilty $\delta$, we have the following bound on the correlation. 
\begin{lemma} [Lemma 2.9 of~\cite{Mossel10}] 
    $
    \rho(\Omega_1, ... \Omega_d; \mue) \leq 1 - \delta^d / 2.
    $
\end{lemma}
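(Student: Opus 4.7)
The plan is to prove, for each fixed $i \in [d]$, the stronger per-coordinate bound $\rho(\Omega_i, \prod_{j \neq i} \Omega_j; \mue) \leq 1 - \delta$; since $\delta \geq \delta^d/2$ whenever $\delta \in [0,1]$ and $d \geq 2$, taking the maximum over $i$ then yields the lemma as stated. The entire argument hinges on one structural observation about the noise process defining $\mue$: coordinate $i$ gets resampled with probability exactly $\delta$, and conditioned on this event the value $x_i$ is uniform on $[d-1]$ and independent of $x_{-i} := (x_j)_{j \neq i}$. Because the per-coordinate resampling decisions are independent, conditioning on ``coordinate $i$ was resampled'' leaves the distribution of $x_{-i}$ unchanged and equal to the $\mue$-marginal $P_{-i}$. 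Writing $U_i$ for the uniform distribution on $[d-1]$, this gives the pointwise lower bound
\[
\mue(x_i, x_{-i}) \;\geq\; \delta\cdot U_i(x_i)\cdot P_{-i}(x_{-i}).
\]

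Two bookkeeping facts then make the decomposition clean. First, the $x_i$-marginal of $\mue$ is itself $U_i$: pairwise independence of $\mu$ forces each single-coordinate marginal to be uniform on $[d-1]$, and independent noising preserves uniformity. Second, therefore $\nu' := (\mue - \delta\,U_i \otimes P_{-i})/(1-\delta)$ is a genuine probability distribution on $\Omega_i \times \prod_{j \neq i}\Omega_j$ whose marginals match those of $\mue$, namely $U_i$ and $P_{-i}$. Hence we obtain the mixture representation $\mue = \delta\,(U_i \otimes P_{-i}) + (1-\delta)\,\nu'$ between two honest couplings with identical marginals.

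The correlation bound now follows from a standard Markov-operator argument. Let $T : L^2(\Omega_i, U_i) \to L^2(\prod_{j\neq i}\Omega_j, P_{-i})$ be defined by $Tf(x_{-i}) := \E_{\mue}[f(X_i) \mid X_{-i} = x_{-i}]$; the covariance identity $\mathsf{Cov}[f(X_i), g(X_{-i})] = \langle g, Tf \rangle$ (combined with the fact that unit-variance, zero-mean functions live in the mean-zero subspaces) shows that $\rho(\Omega_i, \prod_{j\neq i}\Omega_j; \mue)$ equals the operator norm of $T$ restricted to $\{f : \E_{U_i}[f] = 0\}$. The mixture above gives $T = \delta\,T^{\mathrm{prod}} + (1-\delta)\,T^{\nu'}$, where $T^{\mathrm{prod}}f$ is the constant $\E_{U_i}[f]$ and $T^{\nu'}$ is the Markov operator of $\nu'$, hence an $L^2$-contraction by the conditional Jensen inequality. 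On a mean-zero input $f$ the product piece outputs $0$, so $\|Tf\|_2 = (1-\delta)\|T^{\nu'}f\|_2 \leq (1-\delta)\|f\|_2$, giving exactly the per-coordinate bound $\rho \leq 1 - \delta$.

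The only delicate point is to verify that the product piece in the decomposition uses the true $x_i$-marginal of $\mue$, so that $T^{\mathrm{prod}}f$ really vanishes on mean-zero functions; this is precisely where the uniform-marginals property of $\mu$ (a direct consequence of its pairwise independence) is essential. Without it, one would have to absorb some other marginal into the product piece and lose the clean vanishing. Apart from this piece of bookkeeping, I do not foresee any further obstacle: the rest is a routine calculation.
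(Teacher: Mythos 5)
Your proof is correct, and it takes a genuinely different route from the paper: the paper gives no argument at all for this statement, simply invoking Lemma 2.9 of Mossel's paper, which is a generic bound on the correlation of a correlated space (stated in terms of the structure of the support of the distribution, not the noise process), and the constant $1-\delta^d/2$ is just what that black-box lemma delivers. You instead exploit the specific way $\mue$ is built: since coordinate $i$ is independently resampled with probability $\delta$, you get the mixture decomposition $\mue = \delta\,(U_i \otimes P_{-i}) + (1-\delta)\,\nu'$ with matching marginals (using that pairwise independence of $\mu$ forces uniform single-coordinate marginals, preserved by noising), and then the Markov-operator/Jensen argument gives the per-coordinate bound $\rho(\Omega_i,\prod_{j\neq i}\Omega_j;\mue) \leq 1-\delta$, which is strictly stronger than the cited $1-\delta^d/2$ and implies it since $\delta^d/2 \leq \delta$ on $[0,1]$. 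The one step you flag loosely --- that $\rho$ "equals" the norm of $T$ on mean-zero functions --- is standard, and in any case only the upper-bound direction is needed, which you already get from centering $f,g$ and Cauchy--Schwarz. So your argument buys a self-contained and quantitatively sharper statement, at the cost of redoing by hand what the paper outsources to Mossel; for the paper's purposes either bound suffices, since the invariance principle only needs the correlation bounded away from $1$ by a function of $\delta$ and $d$.
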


\begin{definition}
For any function $F : [d-1]^R \to\R$, the  {\em Efron-Stein} decomposition is given by 
\[
F(y) = \sum_{S \subseteq [R]} F_S (y)
\]
where the functions $F_S$ satisfy 
\begin{itemize}
    \item $F_S$ only depends on $y_S$, the restriction of $y$ to the coordinates of $S$.
    \item For all $S \not\subseteq S'$ and all $z_{S'}$, $\E_y[F_S(y) | y_{S'} = z_{S'}] = 0$. 
\end{itemize}
\end{definition}
Based on the Efron-Stein decomposition, we can define (low-degree) influences of a function. For a function $F : [d-1]^R \to \R$ and $p \geq 1$, let $\| F \|_p := \E[|F(y)|^p]^{1/p}$
\begin{definition}\label{def:influence}
For any function $F : [d-1]^R \to \R$,
its {\em $i$th influence} is defined as 
\[
\Inf_i := \sum_{S : i \in S} \| f_S \|_2^2.
\]
Its {\em $i$th degree-$t$ influence} is defined as 
\[
\Inf^{\leq t}_i := \sum_{S : i \in S, |S| \leq t} \| f_S \|_2^2,
\]
\end{definition}

We crucially use the following invariance principle applied to our dictatorship test, using the fact that $\mu$ and $\mue$ are pairwise independent. 

\begin{theorem} [Theorem 1.14 of~\cite{Mossel10}]
For any $\eps > 0$, there exist $t \in \N$ and $\tau > 0$ such that the following is true. Let $F_1, \dots, F_d : [d-1]^R \to [0, 1]$. If $\max(\Inf^{\leq t}_i[F_1], \dots,  \Inf^{\leq t}_i[F_d]) \leq \tau$ for every $i \in [R]$, \[
\E_{(x^1, \dots, x^d) \in \mue^{\otimes R}}[\prod_{j=1}^d F_j (x^d)] \geq 
\prod_{j=1}^d \E[F_j(x^j)]
- \eps.
\]
\label{thm:invariance}
\end{theorem}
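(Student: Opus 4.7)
The plan is to follow the standard three-step invariance methodology from \cite{Mossel10}: (i) reduce each $F_j$ to a low-degree object via the $\delta$-noise, (ii) replace the correlated discrete space $(\Omega^d,\mue)$ coordinate-by-coordinate with a moment-matched Gaussian space via a Lindeberg hybrid argument, and (iii) prove the desired lower bound directly in the Gaussian world by using that $\rho(\Omega_1,\dots,\Omega_d;\mue) < 1$.

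First I would perform a noise-smoothing/truncation. Expand each $F_j$ in its Efron--Stein decomposition $F_j = \sum_{S \subseteq [R]} (F_j)_S$ with respect to the product measure $\mue^{\otimes R}$. Because each coordinate of $\mue$ carries an independent resampling with probability $\delta$, the measure $\mue^{\otimes R}$ is the image of a uniform measure under the noise operator $T_{1-\delta}$, and hypercontractivity shows that $(F_j)_S$ contributes at most $(1-\delta)^{2|S|}$ (times its norm) to second-moment sums. Choosing $t = t(\eps, \delta)$ large enough, the truncated functions $F_j^{\leq t} := \sum_{|S| \leq t} (F_j)_S$ satisfy $\|F_j - F_j^{\leq t}\|_2 \leq \eps_1$ for any prescribed $\eps_1$, and by a telescoping Hölder argument $\bigl|\E[\prod_j F_j] - \E[\prod_j F_j^{\leq t}]\bigr| \leq \eps/3$.

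Next comes the heart of the argument: Lindeberg replacement. For each coordinate $i \in [R]$, I would introduce a $d$-dimensional Gaussian vector $(g^1_i,\dots,g^d_i)$ whose first and second moments agree with those of $(x^1_i,\dots,x^d_i) \sim \mue$ (more precisely, one replaces each $\Omega$-valued variable by a matched Gaussian in the ensemble basis of the Efron--Stein components). Pairwise independence of $\mu$ (and hence of $\mue$ on any two coordinates within a hyperedge) is exactly what guarantees that such a matching is compatible across the $d$ marginals. Writing $F_j^{\leq t}$ in an orthonormal basis of the Efron--Stein pieces turns it into a multilinear polynomial of degree $\leq t$. A hybrid argument, swapping one index at a time from discrete to Gaussian, contributes an error that is bounded in terms of the low-degree influences and the degree $t$: summed over $i \in [R]$ this gives a total swap error at most $R \cdot d^{O(t)} \cdot \sqrt{\max_{i,j} \Inf_i^{\leq t}[F_j]} \leq \eps/3$ provided $\tau = \tau(\eps,t,d)$ is small enough.

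Finally, in the Gaussian world I would use the correlation bound $\rho \leq 1 - \delta^d/2 < 1$ to decouple: each matched Gaussian coordinate can be written as $\sqrt{1-\eta}\,Y_i + \sqrt{\eta}\,Z_i^j$ for a common $Y_i$ and independent $Z_i^j$ with $\eta > 0$. Conditioning on $Y$ makes the $d$ Gaussian arguments of $F_1^{\leq t},\dots,F_d^{\leq t}$ independent, and a Gaussian correlation/noise-stability inequality (together with the $[0,1]$ bound, maintained by clipping the truncations) gives $\E_G[\prod_j F_j^{\leq t}] \geq \prod_j \E_G[F_j^{\leq t}] - \eps/3$. Chaining the three error bounds with the identity $\E[F_j^{\leq t}] = \E_G[F_j^{\leq t}]$ (mean preservation under moment matching) yields the stated inequality.

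The main obstacle is the quantitative Lindeberg step in (ii): to turn the influence hypothesis into a usable error bound one has to decompose $F_j^{\leq t}$ in a \emph{joint} orthonormal basis that is compatible across all $d$ correlated marginals and then bound the swap error by third moments of the basis elements times \emph{low-degree} influences, which is where pairwise independence of $\mue$ is indispensable. Handling the fact that the $F_j$ are arbitrary bounded functions rather than polynomials (so that the Gaussian analogues need not lie in $[0,1]$) requires combining the truncation in step (i) with a careful $[0,1]$-clipping of the Gaussian images; this interaction between smoothing, truncation, and clipping is the technically delicate part of the proof.
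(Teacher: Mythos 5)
This statement is not proved in the paper at all: it is quoted verbatim as Theorem 1.14 of \cite{Mossel10} and used as a black box, so the only meaningful comparison is between your sketch and Mossel's actual argument. Your three-step outline (smoothing/truncation, Lindeberg replacement, conclusion in the Gaussian world) is the right general methodology, but the two places where the hypotheses of the theorem actually enter are misassigned, and this leaves genuine gaps. First, pairwise independence is not what makes the moment matching ``compatible'' in the Lindeberg step -- matching first and second moments is always possible. Its real role is that, since every mean-zero function on one block is uncorrelated with every mean-zero function on another block, the matched Gaussian ensembles attached to the $d$ blocks have zero covariance and are therefore \emph{independent}; equivalently, $\mue$ and the product of its marginals have identical pairwise covariance structure, so the invariance principle drives both to the same Gaussian quantity, and under the product distribution $\E[\prod_j F_j(x^j)] = \prod_j \E[F_j]$ holds exactly. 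Your step (iii) instead posits a common-component decomposition $\sqrt{1-\eta}\,Y_i + \sqrt{\eta}\,Z_i^j$ and an unnamed ``Gaussian correlation/noise-stability inequality'' giving $\E_G[\prod_j F_j^{\leq t}] \geq \prod_j \E_G[F_j^{\leq t}] - \eps/3$. That decomposition describes only a specific exchangeable positively correlated ensemble, and the claimed inequality is not a theorem for general correlated Gaussians (for negatively correlated ensembles the $\geq$ direction fails outright); the inequality you need is exactly the independence coming from pairwise independence, which your argument never uses at this point.

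Second, the bound $\rho(\Omega_1,\dots,\Omega_d;\mue) \leq 1-\delta^d/2 < 1$ is not used to ``decouple'' in the Gaussian world; in Mossel's proof it is what licenses the smoothing step, via the lemma that $\E[\prod_j F_j]$ changes by at most $\eps/3$ when each $F_j$ is replaced by $T_{1-\gamma}F_j$ for suitable $\gamma(\rho,\eps)$, after which truncation to degree $t$ and the influence-based hybrid error are controllable. Your justification of the truncation -- that because $\mue$ carries independent resampling noise, ``hypercontractivity shows that $(F_j)_S$ contributes at most $(1-\delta)^{2|S|}$'' -- is incorrect as stated: the $F_j$ are arbitrary $[0,1]$-valued functions, their Efron--Stein components are not automatically damped (noise in the \emph{distribution} is not the same as the \emph{functions} being smoothed by $T_{1-\delta}$), and without the $\rho<1$ smoothing lemma the inequality $|\E[\prod_j F_j] - \E[\prod_j F_j^{\leq t}]| \leq \eps/3$ has no proof. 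So the skeleton is right, but the two pivotal mechanisms -- where pairwise independence and where $\rho<1$ are actually consumed -- are swapped or replaced by claims that are false in the stated generality, and the proposal as written does not close.
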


Given this theorem, consider an arbitrary set of vertices $S \subseteq \Omega^R$ such that $|S| = \alpha|\Omega^R|$, and let $F_1, \dots, F_d$ all be the reverse indicator function of $S$; $F_i(x) = 1$ if $x \notin S$ and $0$ otherwise. Then the fraction of the hyperedges not intersected by $S$ is 
\[
\E_{(x^1, \dots, x^d) \in \mue^{\otimes R}}[\prod_{j=1}^d F_j (x^j)] \geq 
\prod_{j=1}^d \E[F_j(x^j)]
- \eps
= (1 - \alpha)^d - \eps,
\]
unless $F_i$ has an influential coordinate.

\subsubsection{Reduction from Unique Games}
In this subsection, we introduce the reduction from the Unique Games using the dictatorship test constructed.
We first introduce the Unique Games Conjecture~\cite{Khot02}, which is stated below.
\begin{definition} [Unique Games]
An instance $\mathcal{L} (G(V \cup W, E), [R], \left\{ \pi(v, w) \right\}_{(v, w) \in E})$ of Unique Games consists of a regular bipartite graph $G(V \cup W, E)$ and a set $[R]$ of labels. For each edge $(v, w) \in E$ there is a constraint specified by a permutation $\pi(v, w) : [R] \rightarrow [R]$. Given a labeling $\ell : V \cup W \rightarrow [R]$, let $\Valug(\ell)$ be the fraction of edges satisfied by $\ell$, where an edge $e = (v, w)$ is said to be satisfied if $\ell(v) = \pi(v, w)(\ell(w))$. 
Let $\Optug(\mathcal{L}) = \max_{\ell} (\Valug(\ell))$. 
\end{definition}
\begin{conjecture} [Unique Games Conjecture~\cite{Khot02}] 
\label{conj:ug}
For any constant $\eta > 0$, there is $R = R(\eta)$ such that, for a Unique Games instance $\mathcal{L}$ with label set $[R]$, it is NP-hard to distinguish between
\begin{itemize}
\item $\Optug(\mathcal{L}) \geq 1 - \eta$. 
\item $\Optug(\mathcal{L}) \leq \eta$. 
\end{itemize}
\end{conjecture}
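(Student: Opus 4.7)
The final statement is the Unique Games Conjecture itself, which remains one of the most important open problems in theoretical computer science. Any proof would follow the general template that has driven progress on related conjectures: construct a probabilistically checkable proof with unique constraints starting from a canonical NP-hard problem, and analyze its soundness via Fourier analytic or expansion-based tools. The plan is to begin from a hard instance of Label Cover (obtained via the PCP theorem together with parallel repetition), and to encode the supposed label of each vertex by a table indexed over some combinatorial domain whose automorphism group naturally generates unique (bijective) constraints.

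The key steps, in order, are as follows. First, select a base combinatorial structure — the Grassmann scheme $\mathrm{Gr}(n, k)$ over a finite field is the current leading candidate, as in the 2-to-2 Games Theorem of Khot, Minzer, and Safra — whose edges can be labeled by permutations that encode the Label Cover bijections. Second, design a PCP verifier that queries two such tables and tests that they agree after applying the canonical bijection induced by an edge between the corresponding vertices; completeness would follow from dictator-style assignments derived from a good labeling. Third, and most critically, prove soundness: if a putative proof is accepted with probability at least $\eta$, then an expansion or pseudorandomness property of the base structure forces the existence of an approximate dictator, from which a labeling satisfying a constant fraction of the Label Cover instance can be decoded. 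Finally, translate the resulting completeness–soundness gap into the exact form required by \Cref{conj:ug}, namely $1 - \eta$ versus $\eta$ for any target $\eta > 0$ at a finite alphabet size $R = R(\eta)$.

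The main obstacle is the third step, which is precisely where the conjecture has resisted all attempts for over two decades. For 2-to-2 games, Khot, Minzer, and Safra resolved the analogous soundness step by proving a deep expansion theorem for the Grassmann graph, yielding the 2-to-2 Games Theorem but only with approximately unique constraints and unavoidable alphabet blowup. Upgrading to strictly $1$-to-$1$ constraints appears to require either substantially stronger spectral structure on a new combinatorial domain, or an entirely new analytic framework that bypasses the Grassmann template. Any such attempt must simultaneously be consistent with the Arora–Barak–Steurer subexponential-time algorithm — which rules out reductions that produce instances with too much global spectral structure — and with known Sum-of-Squares barriers for several candidate reductions. Given these obstructions, a genuinely new idea, likely a new source of PCP hardness not fitting the dictatorship–expansion template, seems to be required, and I would not expect to resolve the conjecture within the scope of a single paper; the realistic local goal is instead to use \Cref{conj:ug} as an assumption, as is done in the proof of \Cref{thm:hvc} that follows.
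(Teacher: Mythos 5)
You are right that this statement is the Unique Games Conjecture itself, which the paper does not prove but merely states (citing Khot) and uses as a hypothesis for \Cref{thm:hvc} and \Cref{thm:hardness}; your conclusion to treat it as an assumption rather than attempt a proof matches the paper exactly. Your Grassmann/2-to-2 sketch is reasonable background on the state of the art, and you correctly flag that it does not constitute a proof.
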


Given $d$ and $\eps$, fix $t, \tau$ from Theorem~\ref{thm:invariance}. 
Given an instance $\mathcal{L} (G(V \cup W, E), [R], \left\{ \pi(v, w) \right\}_{(v, w) \in E})$ of Unique Games, we construct a weighted $d$-uniform hypergraph.
For $y \in [d-1]^R$ and a permutation $\pi : [R] \rightarrow [R]$, let $y \circ \pi \in [d-1]^R$ be defined by $(y \circ \pi)_i = (y)_{\pi^{-1}(i)}$.
\begin{itemize}
\item The set of vertices $\calV := V \times [d-1]^R$.
\item The hyperedges are described by the following probabilistic procedure to sample a $d$-tuple of vertices, where the weight of the hyperedge is exactly the probability that it is sampled. 
\begin{itemize}
\item Sample $w \in W$ uniformly at random and its neighbors $v_1, \dots, v_d$ uniformly and independently. 
\item Sample $(x^1, \dots, x^d) \sim \mue^{\otimes R}$. Output the pair $((v_1, x^1 \circ \pi_{v_1, w}), \dots, (v_d, x^d \circ \pi_{v_d, w}))$. 
\end{itemize}
\end{itemize}

\paragraph{Completeness.} Suppose that $\Valug(\ell) \geq 1 - \eta$ for some labeling $\ell : V \cup W \rightarrow [R]$. 
Consider the set of vertices $\mathcal{U} := \{ (v, x) : x_{\ell(v)} = 1 \}$. In the above sample procedure, the probability that $\ell$ satisfies all the edges $(w, v_1), \dots, (w, v_d)$ is at least $1 - d \eta$, and if that happens, the probability that the sampled hyperedge intersects with $\mathcal{U}$ is at least $1 - \delta$. Therefore, the total fraction of hyperedge intersected by $\mathcal{U}$ is at least $1 - d \eta - \delta$, which can be made at least $1 - \eps$ by choosing small enough $\eta$ and $\delta$. 

\paragraph{Soundness.}
Consider any solution $\calU \subseteq \calV$ with $\calU = \alpha |\calV|$ and suppose that it intersects at least a $1 - (1 - \alpha)^d + \eps$ fraction of the hyperedges. 

For any $v \in V$, let $F_{v} : [d - 1]^R \to \{ 0, 1 \}$ be the reverse indicator function 
$\calU$ for $v$; $F_v(x) = 0$ if $(v, x) \in \calU$ and $1$ otherwise. For each $w \in W$, let $F_{w} : [d - 1]^R \to [0, 1]$ be such that 
\[
F_{w}(y) := \E_{(v, w) \in E} [F_{v}(y \circ \pi_{v, w})].
\]

Then the soundness of the dictatorship test shows that, for some $\tau$ and $t$ depending only on $d$ and $\eps$, unless there exists $i \in [R]$ with $\Inf^{\leq d}_i[F_{w}] > \tau$, the probability that a random hyperedge, in the above probabilistic procedure given $w$, intersects $\calU$ is at most $1 - \E[F_w]^d + \eps / 2$.

Call $w$ {\em good} if $w$ has $i \in [R]$ with $\Inf^{\leq d}_i[F_{w, u}] > \tau$. We define a labeling of the Unique Games instance as follows. First, for any good $w$, let $\ell(w) = i$. Let $\beta$ be the fraction of good $w$'s. Since the fraction of the hyperedges intersected by $\calU$ is at most
\begin{align*}
& \E_{w \mbox{ bad}}
[1 - \E[F_w]^d + \eps / 2] + \beta
\leq 
1 - (\E_{w \mbox { bad}}\E[F_w])^d + \eps / 2  + \beta \\
\leq \, & 1 - (1 - \min(\alpha + \beta, 1))^d + \eps/2 + \beta \leq 
1 - (1 - \alpha)^d + \eps/2 + (d + 1)\beta.
\end{align*}
(The first inequality uses the fact that $1 - (1-x)^d$ is concave in $x$ in $[0, 1]$.) Since this fraction is at least $1 - (1 - \alpha)^d + \eps$, we have $\beta > \frac{\eps}{4d}$. 

From the representation of influences in terms of Fourier coefficients (see Definition~\ref{def:influence}), 
\[
\tau < \Inf_i^{\leq d}[F_{w}] \leq \E_{(v, w) \in E}[\Inf_{\pi_{v, w}(i)}^{\leq d}[F_{v}]]
\]
and we conclude that a $\tau / 2$ fraction of neighbors $v$ of $w$ have $\Inf_{\pi_{v, w}(i)}^{\leq d}(F_{v}) \geq \tau / 2$. 
We choose $\ell(v)$ uniformly from a candidate set $\{ i : \Inf_{i}^{\leq d}[F_{v}] \geq \tau / 2 \}$. (If $v$ has no candidate, choose $\ell(v)$ arbitrarily.)
Since $\sum_i \Inf_{i}^{\leq d}[F_{v}] \leq d$, there are at most $O(d/\tau)$ possible $i$'s, so the total size of the above set is $O(\frac{d}{\tau})$. 

So this labeling strategy satisfies at least $\Omega(\frac{\beta \tau}{d})$ fraction of Unique Games constraints in expectation. 
Taking $\eta$ small enough completes the proof of Theorem~\ref{thm:hvc}.

\bibliographystyle{plain}
\bibliography{references}

\end{document}